\newtheorem{theorem}{Theorem}[section]
\newtheorem{claim}[theorem]{Claim}
\newtheorem{corollary}[theorem]{Corollary}
\newtheorem{definition}[theorem]{Definition}
\newcommand{\ignore}[1]{}
\newcommand{\cD}{\mathcal{D}}
\newcommand{\cS}{\mathcal{S}}
\newcommand{\cT}{{\cal T}}
\newcommand{\eps}{\varepsilon}
\newcommand{\bS}{\boldsymbol{S}}
\newcommand{\bT}{\boldsymbol{T}}
\newcommand{\NN}{\mathbb{N}}
\newcommand{\EX}{\hbox{\bf E}}
\newcommand{\sz}[1]{\mathrm{size}(#1)}
\newcommand{\sample}{{\tt sample}}
\newcommand{\shadow}{{\tt Shadow-Finder}}
\newcommand{\mainalg}{{\sc Tur\'{a}n-shadow}}
\newcommand{\Sec}[1]{\hyperref[sec:#1]{\S\ref*{sec:#1}}} 
\newcommand{\Eqn}[1]{\hyperref[eq:#1]{(\ref*{eq:#1})}} 
\newcommand{\Fig}[1]{\hyperref[fig:#1]{Fig.\,\ref*{fig:#1}}} 
\newcommand{\Tab}[1]{\hyperref[tab:#1]{Tab.\,\ref*{tab:#1}}} 
\newcommand{\Thm}[1]{\hyperref[thm:#1]{Theorem\,\ref*{thm:#1}}} 
\newcommand{\Fact}[1]{\hyperref[fact:#1]{Fact\,\ref*{fact:#1}}} 
\newcommand{\Lem}[1]{\hyperref[lem:#1]{Lemma\,\ref*{lem:#1}}} 
\newcommand{\Prop}[1]{\hyperref[prop:#1]{Prop.~\ref*{prop:#1}}} 
\newcommand{\Cor}[1]{\hyperref[cor:#1]{Corollary~\ref*{cor:#1}}} 
\newcommand{\Conj}[1]{\hyperref[conj:#1]{Conjecture~\ref*{conj:#1}}} 
\newcommand{\Def}[1]{\hyperref[def:#1]{Definition~\ref*{def:#1}}} 
\newcommand{\Alg}[1]{\hyperref[alg:#1]{Alg.~\ref*{alg:#1}}} 
\newcommand{\Ex}[1]{\hyperref[ex:#1]{Ex.~\ref*{ex:#1}}} 
\newcommand{\Clm}[1]{\hyperref[clm:#1]{Claim~\ref*{clm:#1}}} 
\newcommand{\Obs}[1]{\hyperref[obs:#1]{Obs.~\ref*{obs:#1}}} 
\newcommand{\Step}[1]{\hyperref[step:#1]{Step~\ref*{step:#1}}} 
\begin{document}
\sloppy




%

\title{A Fast and Provable Method for Estimating Clique Counts
Using Tur\'{a}n's Theorem}
%
%
%
%
%

\numberofauthors{2} 
%
\author{
%
%
\alignauthor
Shweta Jain \\
\affaddr{University of California, Santa Cruz}\\
\affaddr{Santa Cruz, CA}\\
\email{sjain12@ucsc.edu}
\alignauthor C. Seshadhri \\
\affaddr{University of California, Santa Cruz}\\
\affaddr{Santa Cruz, CA}\\
\email{sesh@ucsc.edu}
}

\maketitle

\begin{abstract}
Clique counts reveal important properties about the structure of massive graphs,
especially social networks.
The simple setting of just 3-cliques (triangles) has received much attention from 
the research community. For larger cliques (even, say $6$-cliques)
the problem quickly becomes intractable because of combinatorial explosion. Most methods used for triangle counting do not scale for large cliques, and existing algorithms require massive parallelism to be feasible. 

We present a new randomized algorithm that provably approximates the number of $k$-cliques,
for any constant $k$. The key insight is the use of (strengthenings of) the classic Tur\'an's theorem:
this claims that if the edge density of a graph is sufficiently high, the $k$-clique
density must be non-trivial. We define a combinatorial structure called a \emph{Tur\'an shadow},
the construction of which leads to fast algorithms for clique counting.

We design a practical heuristic, called \mainalg{}, based on this theoretical algorithm, and test 
it on a large class of test graphs. In all cases, \mainalg{} has less than 2\% error, and runs 
in a fraction of the time used by well-tuned exact algorithms. We do detailed comparisons
with a range of other sampling algorithms, and find that \mainalg{} is generally much faster
and more accurate. For example, \mainalg{} estimates all clique numbers up to size $10$ in social
network with over a hundred million edges. This is done in less than three hours on a single commodity
machine.

\end{abstract}

%
%
\begin{CCSXML}
<ccs2012>
<concept>
<concept_id>10003752.10010070.10010099.10003292</concept_id>
<concept_desc>Theory of computation~Social networks</concept_desc>
<concept_significance>500</concept_significance>
</concept>
<concept>
<concept_id>10002950.10003624.10003633.10003646</concept_id>
<concept_desc>Mathematics of computing~Extremal graph theory</concept_desc>
<concept_significance>300</concept_significance>
</concept>
</ccs2012>
\end{CCSXML}

\ccsdesc[500]{Theory of computation~Social networks}
\ccsdesc[300]{Mathematics of computing~Extremal graph theory}

%
%

%
%
\printccsdesc


\keywords{Cliques, sampling, graphs, Tur\'{a}n's theorem}

\begin{figure*}[t!]
    \begin{subfigure}[b]{0.3\textwidth}
    \centering
    \includegraphics[width=\textwidth]{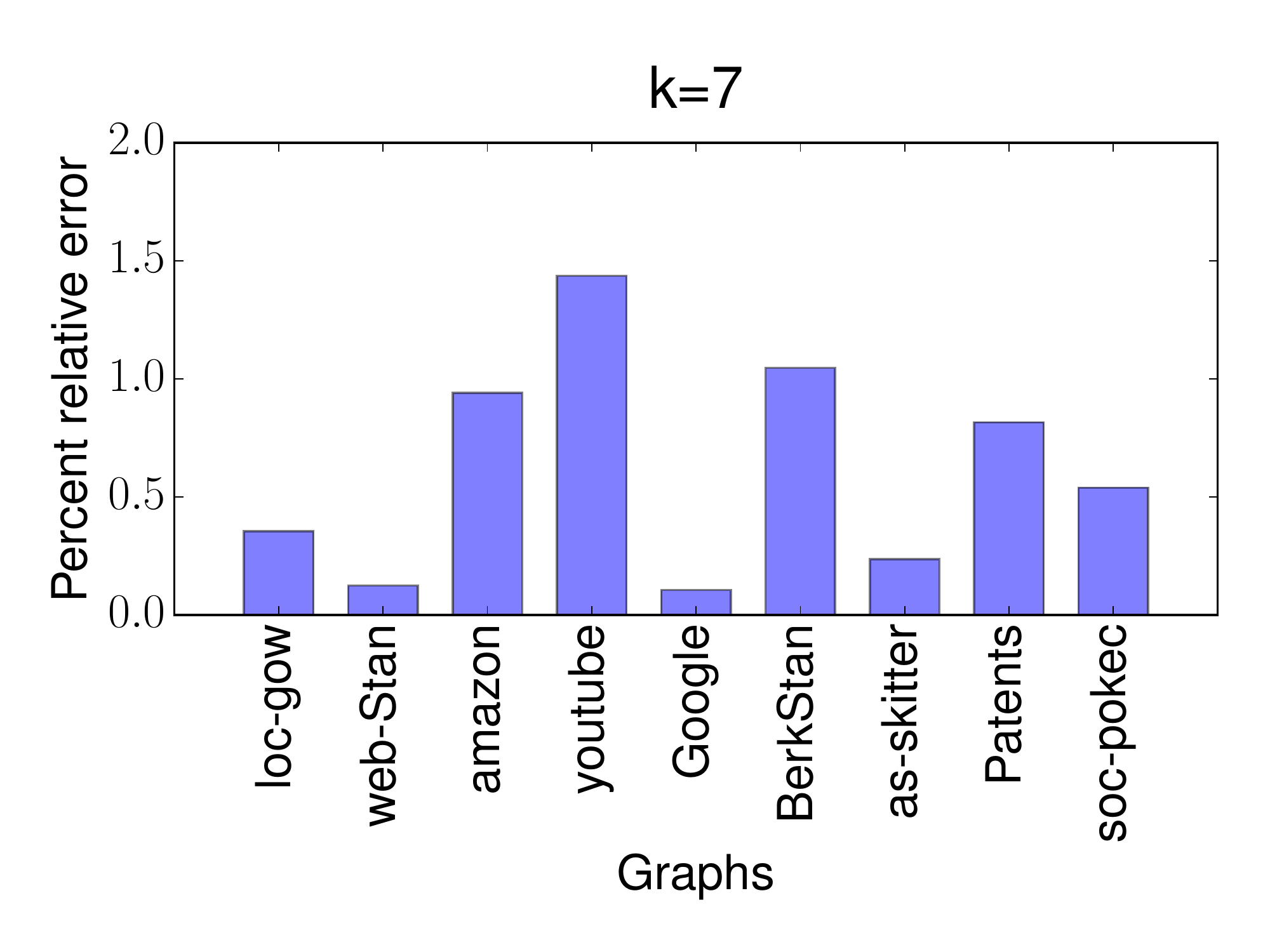}
        \caption{\footnotesize{Percent relative error for k=7}}
        \label{fig:acc}
    \end{subfigure}
~
    \begin{subfigure}[b]{0.3\textwidth}
    \centering
    \includegraphics[width=\textwidth]{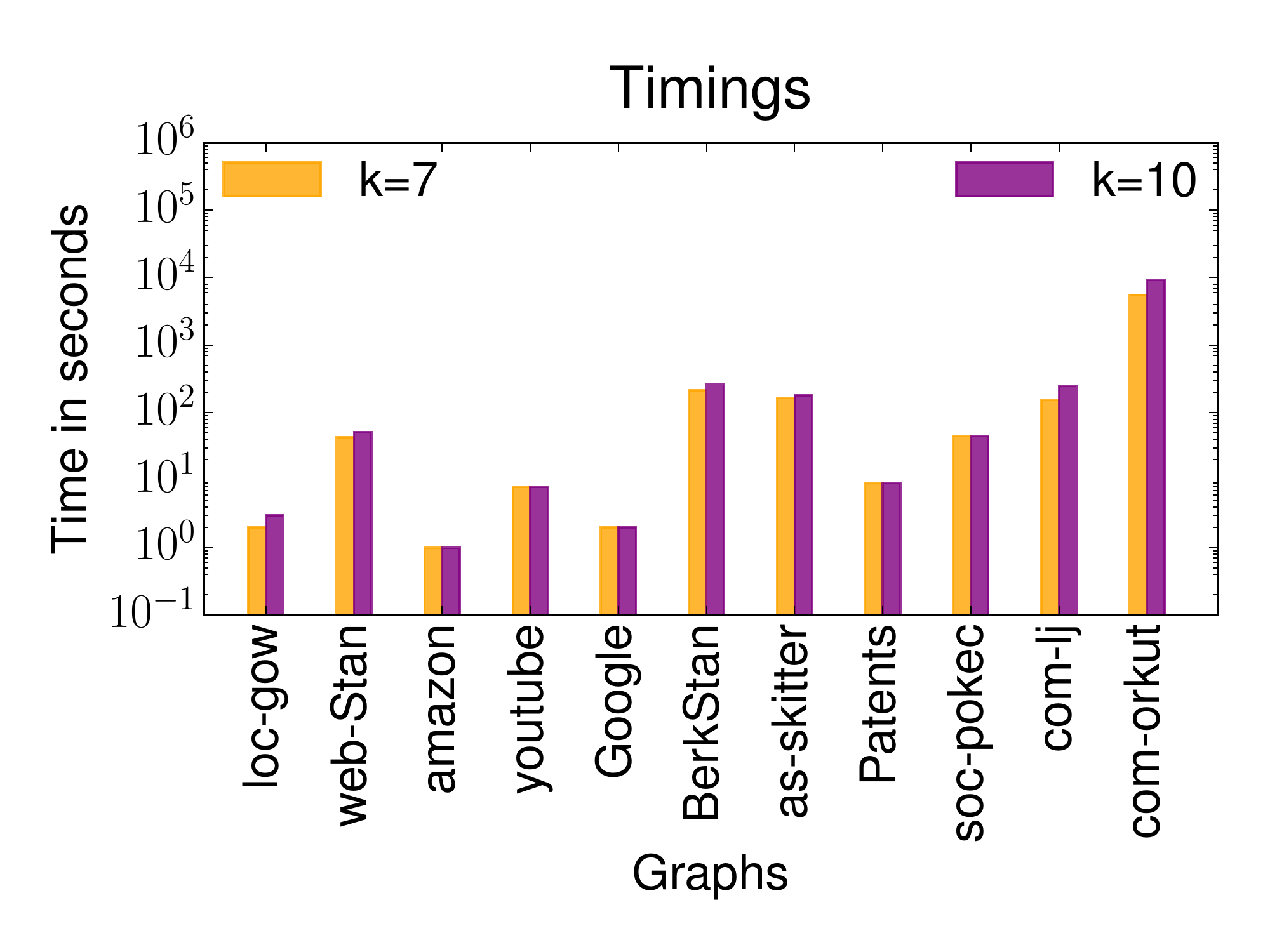}{}
        \caption{\footnotesize{Timings for k=7 and k=10}}
        \label{fig:timings}
    \end{subfigure}
~
    \begin{subfigure}[b]{0.3\textwidth}
    \centering
    \includegraphics[width=\textwidth]{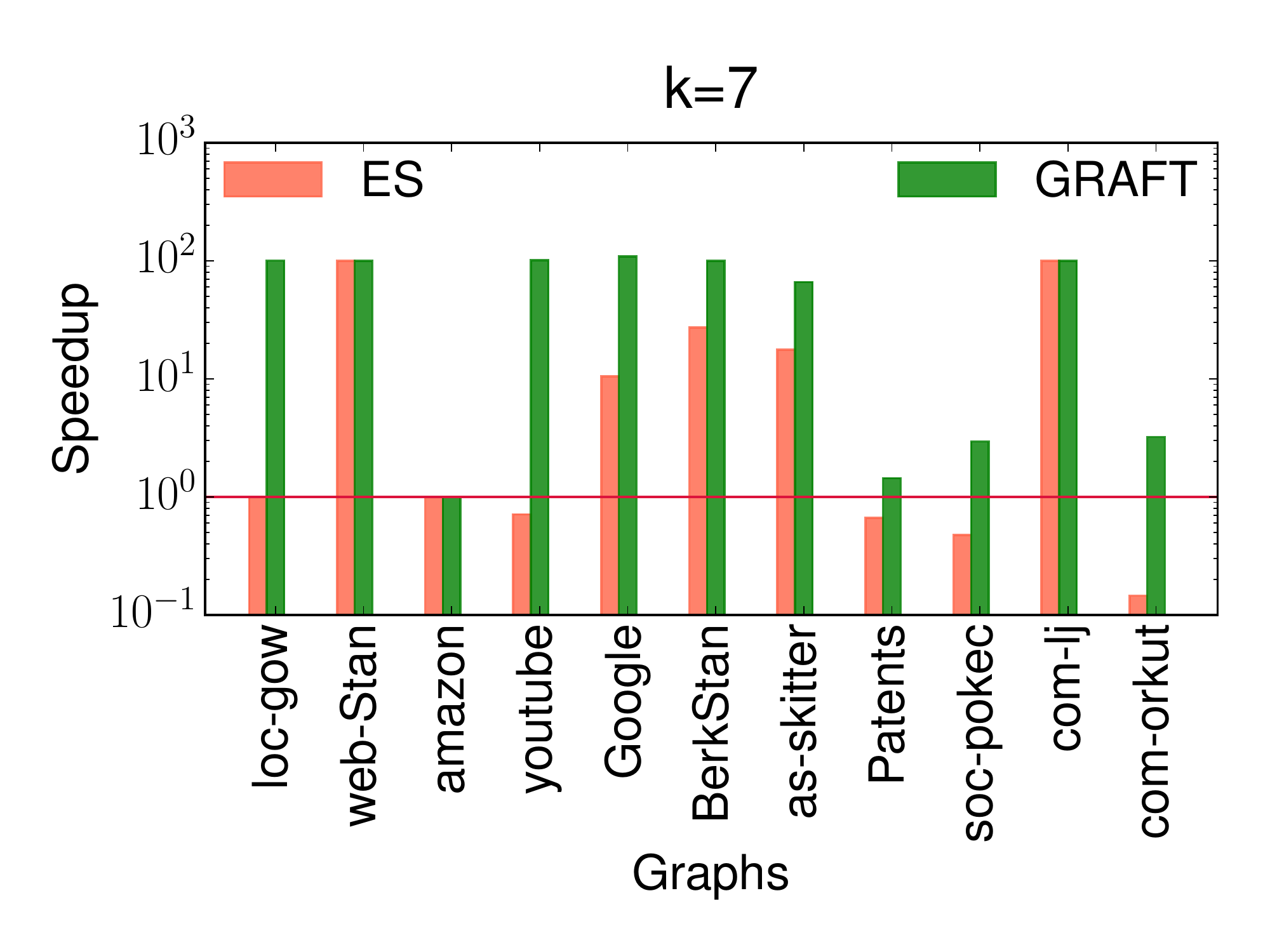}
        \caption{\footnotesize{Speedup for k=7}}
        \label{fig:7-speedup}
    \end{subfigure}
    \caption{ Summary of behavior of \mainalg{} over several datasets. \Fig{acc} shows the percent relative error in the estimates for k=7 given by \mainalg{}. We only show results for graphs for which we were able to obtain exact counts using either brute force enumeration, or from the results of~\cite{FFF15}. The errors are always $<2\%$ and mostly $<1\%$. \Fig{timings} shows the time taken by \mainalg{} for k=7 and k=10. \Fig{7-speedup} shows the speedup (time of algorithm/time of \mainalg) over other state of the art algorithms for k=7. The red line indicates a speedup of 1. We could not give a figure for speedup for k=10 because for most instances no competing algorithm terminated in min(7 hours, 100 times \mainalg{} time).}
\end{figure*}

\vfill\eject

\section{Introduction}\label{sec:intro}

Pattern counting is an important graph analysis tool in many domains:
anomaly detection, social network analysis, bioinformatics among others~\cite{HoLe70,Milo2002,Burt04,PrzuljCJ04,HoBe+07,Fa10}. 
Many real world graphs show significantly higher counts of certain patterns 
than one would expect in a random graph~\cite{HoLe70,WaSt98,Milo2002}. 
This technique has been referred to with a variety of names: subgraph analysis, motif counting,
graphlet analysis, etc. But the fundamental task is to count the occurrence
of a small pattern graph in a large input graph.
In all such applications, it is essential to have fast algorithms for pattern
counting.

It is well-known that certain patterns
capture specific semantic relationships, and thus the social dynamics are reflected
in these graph structures. The most famous such pattern
is the \emph{triangle}, which consists of three vertices connected to each other.
Triangle counting has a rich history in the social sciences and network
science~\cite{HoLe70,WaSt98,Burt04,FoDeCo10}. 

We focus on the more general problem of \emph{clique counting}. 
A \emph{$k$-clique} is a set of $k$ vertices that are all connected
to each other; thus, a triangle is a $3$-clique.
Cliques are extremely significant in social network analysis
(Chap. 11 of~\cite{HR05} and Chap. 2 of~\cite{J10}).
They are the archetypal example of a dense subgraph,
and a number of recent results use cliques to find large, dense
subregions of a network~\cite{SaSePi14,Ts15,MiPaPe+15,TsPaMi16}.

\subsection{Problem Statement} \label{sec:problem}

Given an undirected graph $G = (V,E)$, a $k$-clique is a set $S$ of $k$ vertices in $V$
with all pairs in $S$ connected by an edge. The problem is to count the number of $k$-cliques,
for varying values of $k$. Our aim is to get all clique
counts for $k \leq 10$.
    
The primary challenge is \emph{combinatorial explosion}. An autonomous system network
with ten million edges has more than a \emph{trillion} $10$-cliques. Any enumeration
procedure is doomed to failure. Under complexity theoretical assumptions, 
clique counting is believed to be exponential in the size $k$~\cite{CHKX04},
and we cannot hope to get a good worst-case algorithm.
Our aim is to employ \emph{randomized sampling} methods for clique counting,
which have seen some success in counting triangles and small
patterns~\cite{TsKaMiFa09,SePiKo13,JhSePi15}. 
We stress that we make no distributional assumption on the graph. 
All probabilities are over the internal randomness of the algorithm itself (which is independent of the instance).

%
%
%

\subsection{Main contributions} \label{sec:contri}

Our main theoretical result is a randomized algorithm \mainalg{} that approximates the
$k$-clique count, for any constant $k$. We implement this algorithm \emph{on a commodity machine}
and get $k$-clique counts (for all $k \leq 10$)
on a variety of data sets, the largest of which has 100M edges. The main features of our work follow.
    
\begin{asparaitem}
\item [\textbf{Extremal combinatorics meets sampling.}] Our novelty is in the
algorithmic use of classic extremal combinatorics results on clique densities. Seminal results of Tur\'{a}n~\cite{TURAN41} and Erd\H{o}s~\cite{E69} provide bounds on the number of cliques
in a sufficiently dense graph. 
\mainalg{} tries to cover $G$ by a carefully chosen collection of dense subgraphs that contains all cliques, called a \emph{Tur\'{a}n-shadow}. 
It then uses standard techniques to design an unbiased estimator for the clique count.
Crucially, the result of Erd\H{o}s~\cite{E69} (a quantitative version
of Tur\'{a}n's theorem) is used to bound the variance of the estimator.

We provide a detailed theoretical analysis of \mainalg, proving correctness and analyzing its time complexity. The running time of our algorithm is bounded by the time to construct the Tur\'{a}n-shadow,
which as we shall see, is quite feasible in all the experiments we run.
%
%

\item[\textbf{Extremely fast.}] In the worst case,
we cannot expect the Tur\'{a}n-shadow to be small, as that would
imply new theoretical bounds for clique counting. But in practice on a wide variety
of real graphs, we observe it to be much smaller than the worst-case bound.
Thus, \mainalg{} can be made into a \emph{practical} algorithm, which also
has provable bounds. We implement \mainalg{} and run it on a commodity machine.
\Fig{timings} shows the time required for \mainalg{} to obtain estimates for $k=7$ and $k=10$ 
in seconds. The {\tt as-skitter} graph is processed in less than 3 minutes,
despite there being billions of $7$-cliques and trillions of $10$-cliques.
All graphs are processed in minutes, except
for an Orkut social network with more than 100M edges (\mainalg{} handles
this graph within 2.5 hours).
\emph{To the best of our knowledge, there is no existing work that gets comparable results.}
An algorithm of Finocchi {\em et al.} also computes clique counts,
but employs MapReduce on the same datasets~\cite{FFF15}. We only require a single machine
to get a good approximation.

We tested \mainalg{} against a number of state of the art algorithmic
techniques (color coding~\cite{AlYuZw94}, edge sampling~\cite{TsKaMiFa09}, GRAFT~\cite{RaBhHa14}).
For 10-clique counting, none of these algorithms terminate for all instances even in 7 hours; \mainalg{} runs in minutes on all but one instance (where
it takes less than 2.5 hours).
For 7-clique counting, \mainalg{} is typically 10-100 times faster than competing 
algorithms.
 (A notable exception
is {\tt com-orkut}, where an edge sampling algorithm runs much faster than \mainalg.)

%
%

\item[\textbf{Excellent accuracy.}] \mainalg{} has extremely small variance,
and computes accurate results (in all instances we could verify).
We compute exact results for $7$-clique
numbers, and compare with the output of \mainalg. In~\Fig{acc}, we see that the
accuracy is well within 2\% (relative error)
of the true answer for all datasets. We do detailed experiments to measure
variance, and in all cases, \mainalg{} is accurate.

The efficiency and accuracy of \mainalg{} allows us to get clique
counts for a variety of graphs, and track how the counts change
as $k$ increases. We seem to get two categories of graphs:
those where the count increases (exponentially) with $k$,
and those where it decreases with $k$, see \Fig{trends}.  This provides a new lens
to view social networks, and we hope \mainalg{} can become
a new tool for pattern analysis.

\end{asparaitem}

\subsection{Related Work} \label{sec:related}

The importance of pattern counts gained attention in bioinformatics
with a seminal paper of Milo {\em et al.}~\cite{Milo2002},
though it has been studied for many decades in the social sciences~\cite{HoLe70}.
Triangle counting and its use has an incredibly rich history, and 
is used in applications as diverse as spam detection~\cite{BeBoCaGi08}, 
graph modeling~\cite{SeKoPi11}, and role detection~\cite{Burt04}.
Counting four cliques is mostly feasible using some recent developments
in sampling and exact algorithms~\cite{JhSePi15,AhNe+15}.

Clique counts are an important part of recent dense subgraph discovery 
algorithms~\cite{SaSePi14,Ts15}. Cliques also play an important role in 
understanding dynamics of social capital~\cite{JRT12}, and their
importance in the social sciences is well documented~\cite{HR05,J10}.
In topological approaches to network analysis, cliques are the fundamental
building blocks used to construct simplicial structures~\cite{SGB16}.

From an algorithmic perspective, clique counting has received
much attention from the theoretical computer science community~\cite{ChNi85,AlYuZw94,CHKX04,V09}.
Maximal clique enumeration has been an important topic~\cite{A73,Tomita04,ES11} since
the seminal algorithm of Bron-Kerbosch~\cite{BronKerb73}. Practical
algorithms for finding the maximum clique were given by Rossi {\em et al.} using
branch and bound methods~\cite{RossiG15}.

Most relevant to our work is a classic algorithm of Chiba and Nishizeki~\cite{ChNi85}.
This work introduces graph orientations to reduce the search time and provides
a theoretical connection to graph arboricity. We also apply this technique in \mainalg.

The closest result to our work is a recent MapReduce algorithm of Finocchi {\em et al.} 
for clique counting~\cite{FFF15}. This result applies the orientation technique of~\cite{ChNi85},
and creates a large set of small (directed) egonets. Clique counting overall 
reduces to clique counting in each of these egonets, and this can be parallelized using MapReduce.
We experiment on the same graphs used in~\cite{FFF15} (particularly, some of the largest ones)
and get accurate results on a single, commodity machine (as opposed to using a cluster).
Alternate MapReduce methods using multi-way joins have been proposed,
though this is theoretical and not tested on real data~\cite{AFRATI13}.

A number of randomized techniques have been proposed for pattern counting,
and can be used to design algorithms for clique counting. Most prominent
are color coding~\cite{AlYuZw94,HoBe+07,BetzlerBFKN11,ZhWaBu+12}
and edge sampling methods~\cite{TsKaMiFa09,TsKoMi11,RaBhHa14}. (MCMC methods~\cite{BhRaRa+12}
typically do not scale for graphs with millions of vertices~\cite{JhSePi15}.)
We perform detailed comparisons with these methods, and conclude that they
do not scale for larger clique counting.

\section{Main Ideas} \label{sec:ideas}

The starting point for our result is a seminal theorem of Tur\'{a}n~\cite{TURAN41}:
if the edge density of a graph is more than $1-\frac{1}{k-1}$, then
it must contain a $k$-clique. (The density bound is often called
the Tur\'{a}n density for $k$.) Erd\H{o}s proved a stronger version~\cite{E69}.
Suppose the graph has $n$ vertices. Then in this case, it contains
$\Omega(n^{k-2})$ $k$-cliques!

Consider the trivial randomized algorithm to estimate $k$-cliques.
Simply sample a uniform random set of $k$ vertices and check
if they form a clique. Denote the number of $k$-cliques by $C$, then
the success probability is $C/{n\choose k}$. Thus,
we can estimate this probability using ${n\choose k}/C$ samples.
By Erd\H{o}s' bound, $C = \Omega(n^{k-2})$.
Thus, if the density of a graph (with $n$ vertices) is above the Tur\'{a}n density, one can estimate
the number of $k$-cliques using $O(n^2)$ samples.

Of course, the input graph $G$ is unlikely to have such a high density,
and $O(n^2)$ is a large bound. We try to cover all $k$-cliques in $G$
using a collection of dense subgraphs. This collection is called
a \emph{Tur\'{a}n shadow}. We employ orientation techniques from Chiba-Nishizeki
to recursively construct a shadow~\cite{ChNi85}.

We take the degeneracy (k-core) ordering in $G$~\cite{SEIDMAN83}. It is well-known
that outdegrees are typically small in this ordering. To count $k$-cliques
in $G$, it suffices to count $(k-1)$-cliques in every outneighborhood.
(This is the main idea in the MapReduce algorithms of Finocchi et al~\cite{FFF15}.)
If an outneighborhood has density higher than the Tur\'{a}n density for $(k-1)$,
we add this set/induced subgraph to the Tur\'{a}n shadow. If not, we recursively
employ this scheme to find denser sets.

When the process terminates, we have a collection of sets (or induced subgraphs)
such that each has density above the Tur\'{a}n threshold (for some appropriate
$k'$ for each set). Furthermore, the sum of cliques ($k'$-cliques, for the same $k'$)
is the number of $k$-cliques in $G$. Now, we can hope to use
the randomized procedure to estimate the number of $k'$-cliques in each set
of the Tur\'{a}n shadow. By a theorem of Chiba-Nishizeki~\cite{ChNi85}, we can argue that number
of vertices in any set of the Tur\'{a}n shadow is at most $\sqrt{2m}$ 
(where $m$ is the number of edges in $G$). Thus, $O(m)$ samples suffices
to estimate clique counts for any set in the Tur\'{a}n shadow.

But the Tur\'{a}n shadow has many sets, and it is infeasible to spend $O(m)$ samples
for each set. We employ a randomized trick. We only need to approximate
the sum of clique counts over the shadow, and can use random sampling for that purpose.
Working through the math, we effectively set up a distribution over the sets
in the Tur\'{a}n shadow. We pick a set from this distribution, pick some subset of random vertices,
and check if they form a clique. The probability of this event can be related
to the number of $k$-cliques in $G$. Furthermore, we can prove that $O(m)$
samples suffice to estimate this probability. All in all, after constructing the Tur\'{a}n shadow,
$k$-clique counting can be done in $O(m)$ time.

\subsection{Main theorem and significance} \label{sec:thm}

The formal version of the main theorem
is \Thm{main-full}. It requires a fair bit of terminology
to state. So we 
state an informal version that maintains
the spirit of our main result. This should provide the reader
with a sense of what we can hope to prove.
We will define the Tur\'{a}n shadow formally
in later sections. But it basically refers to the construct described
above.

\begin{theorem} \label{thm:main} [Informal] Consider graph $G=(V,E)$ with $n$ vertices,
$m$ edges, and maximum core number $\alpha$. 
Let $\bS$ be the Tur\'{a}n $k$-clique shadow of $G$, and let $|\bS|$
be the number of sets in $\bS$.

Given any $\delta > 0, \eps > 0, k$, with probability at least $1-\delta$, the procedure \mainalg{} outputs a $(1+\eps)$-multiplicative
approximation to the number of $k$-cliques in $G$. The
running time is linear in $|\bS|$ and $m \alpha \log(1/\delta)/\eps^2$.
The storage is linear in $|\bS|$.
\end{theorem}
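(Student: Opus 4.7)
The plan has two parts. First, I would establish that the Tur\'{a}n shadow is a decomposition of the $k$-clique count of $G$, so that estimating the sum of clique counts across the shadow is equivalent to estimating $C_k(G)$. Second, I would design a weighted-sampling estimator over the shadow and bound its variance via the quantitative form of Tur\'{a}n's theorem. For the shadow itself, following the Chiba--Nishizeki orientation trick, I would orient $G$ in degeneracy order and, for each vertex $v$, look at the outneighborhood $N^+(v)$; the number of $k$-cliques in $G$ equals the sum over $v$ of the $(k-1)$-clique count inside $G[N^+(v)]$. If that induced subgraph already has edge density exceeding the Tur\'{a}n threshold $1-1/(k-2)$ for $(k-1)$-cliques, I would record $(N^+(v), k-1)$ as a leaf of $\bS$; otherwise I would recurse with target parameter $k-1$ inside that subgraph. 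The degeneracy ordering guarantees every leaf has at most $\alpha$ vertices. Induction on the target parameter then gives
\[
C_k(G) \;=\; \sum_{(S,k_S)\in\bS} C_{k_S}\bigl(G[S]\bigr),
\]
so it suffices to approximate the right-hand side.

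For the estimator, I would set $Z = \sum_{(S,k_S)\in\bS} {|S| \choose k_S}$, sample a shadow entry $(S,k_S)$ with probability ${|S|\choose k_S}/Z$, sample $k_S$ vertices uniformly without replacement from $S$, and let $X$ be the indicator that they induce a clique in $G[S]$. A direct calculation gives
\[
\EX[X] \;=\; \sum_{(S,k_S)\in\bS} \frac{{|S|\choose k_S}}{Z}\cdot\frac{C_{k_S}(G[S])}{ {|S|\choose k_S} } \;=\; \frac{C_k(G)}{Z},
\]
so $Z$ times the empirical mean $\widehat{X}$ of $t$ independent trials is an unbiased estimator of $C_k(G)$. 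The value of $Z$ and a sampler for the shadow distribution (via the alias method) can be prepared in $O(|\bS|)$ time and storage, matching the shadow-construction cost.

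The main obstacle is lower-bounding $\EX[X]$ enough that a multiplicative Chernoff bound yields the promised sample complexity. This is where Erd\H{o}s' strengthening of Tur\'{a}n enters: every leaf set $S$ has edge density strictly above $1-1/(k_S-1)$, so $C_{k_S}(G[S]) = \Omega(|S|^{k_S-2})$, hence $C_{k_S}(G[S])/{|S|\choose k_S} = \Omega(1/|S|^2) = \Omega(1/\alpha^2)$. Averaging against the shadow distribution preserves this bound, so $\EX[X] = \Omega(1/\alpha^2)$, and the multiplicative Chernoff bound yields $t = O(\alpha^2\log(1/\delta)/\eps^2)$ samples suffice for a $(1\pm\eps)$-estimate with confidence $1-\delta$. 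Each trial performs $O(k^2)$ edge queries; combining this per-query cost with the relation $\alpha \le \sqrt{2m}$ from Chiba--Nishizeki delivers the claimed $O(|\bS| + m\alpha\log(1/\delta)/\eps^2)$ running time and $O(|\bS|)$ storage. The hidden constants in Erd\H{o}s' bound depend on $k$, so the assumption of constant $k$ is essential; tracking those constants carefully is the routine but delicate piece needed to turn this sketch into the full proof of \Thm{main-full}.
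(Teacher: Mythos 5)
Your proposal is correct and follows essentially the same route as the paper: the degeneracy-oriented recursive refinement into dense outneighborhoods is exactly the \shadow{} construction, the importance-sampling estimator weighted by ${|S|\choose \ell}$ with a Chernoff bound is exactly \sample{} (Theorem~\ref{thm:sample}), and the lower bound on the success probability via Erd\H{o}s' quantitative Tur\'{a}n theorem together with $\alpha \le \sqrt{2m}$ is precisely how Theorem~\ref{thm:main-full} is assembled. The only cosmetic difference is bookkeeping: the paper phrases the density guarantee as $\gamma$-saturation of the shadow and measures cost via the representation size $\sz{\bS}$ rather than the number of sets, but the argument is the same.
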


Observe that the size of the shadow is critical to the procedure's efficiency. As long
as the number of sets in the Tur\'{a}n shadow is small,
the extra running time overhead is only linear in $m$. And in practice,
we observe that the Tur\'{a}n shadow scales linearly with graph size, leading
to a practically viable algorithm.

\medskip

\textbf{Outline:} In \Sec{prelim}, we formally describe Tur\'{a}n's theorem and
set some terminology. \Sec{shadows} defines (saturated) shadows, and shows how to 
construct efficient sampling algorithms for clique counting from shadow.
\Sec{const} describes the recursive construction of the Tur\'{a}n shadow.
In \Sec{together}, we describe the final procedure \mainalg{}, and prove
(the formal version of) \Thm{main}. Finally, in \Sec{results}, we
detail our empirical study of \mainalg{} and comparison with the state of the art.

\section{Tur\'{a}n's Theorem} \label{sec:prelim}

For any arbitrary graph $H = (V(H), E(H))$, let $C_i(H)$ denote the set of cliques
in $H$, and $\rho_i(H) := |C_i(H)|/{|V(H)| \choose i}$
is the $i$-clique density. Note that $\rho_2(H)$ is the standard
notion of edge density. 

The following theorem of Tur\'{a}n is one of the most important
results in extremal graph theory.

\begin{theorem} \label{thm:turan} (Tur\'an~\cite{TURAN41}) For any graph $H$, 
if $\rho_2(H) > 1-\frac{1}{k-1}$, then $H$ contains a $k$-clique.
\end{theorem}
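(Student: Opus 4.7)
The plan is to prove the contrapositive: if $H$ is $K_k$-free, then $\rho_2(H) \leq 1-\frac{1}{k-1}$ (more precisely, $|E(H)| \leq (1-\frac{1}{k-1})\frac{n^2}{2}$, which yields the density bound up to a lower-order $n/(n-1)$ factor). I would use Zykov's symmetrization argument: first reduce to an edge-maximal extremal graph, identify its structure as complete multipartite, and then recognize the Tur\'an graph as the optimum.

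First I would fix a $K_k$-free graph $H$ on $n$ vertices with the largest possible number of edges. The key structural step is to show that non-adjacency is an equivalence relation on $V(H)$. For any two non-adjacent vertices $u,v$, if $\deg(u) < \deg(v)$, replacing $u$ by a ``clone'' of $v$ (delete all edges at $u$, then join $u$ to exactly $N(v)$) keeps $H$ $K_k$-free --- since the clone and $v$ remain non-adjacent, no new clique can arise --- while strictly increasing $|E(H)|$, contradicting maximality. A similar duplication argument rules out triples $u \not\sim v$, $v \not\sim w$, $u \sim w$: cloning the appropriate vertex among $\{u,v,w\}$ strictly increases the edge count. Together these local modifications force $H$ to be a complete multipartite graph.

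Second I would count the parts. Because $H$ is $K_k$-free, the number of parts is at most $k-1$. Among complete $r$-partite graphs on $n$ vertices with $r \leq k-1$, one more swap argument (if two parts differ in size by at least $2$, moving a vertex from the larger to the smaller strictly increases the number of edges) identifies the balanced Tur\'an graph $T(n,k-1)$ as the unique optimum. A direct computation then yields $|E(T(n,k-1))| \leq (1-\frac{1}{k-1})\frac{n^2}{2}$, which rearranges to the claimed density bound.

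The main obstacle is the first step: one must verify carefully that each clone/swap simultaneously preserves $K_k$-freeness \emph{and} strictly increases $|E(H)|$, so that maximality can legitimately be invoked. The subtlety is that replacing $u$ by a clone of $v$ could, a priori, create a $K_k$ on a vertex set disjoint from $u$; handling this cleanly requires noting that every $k$-clique in the modified graph either avoids $u$ (hence already existed in $H$) or uses $u$ together with vertices of $N(v)$, which would give a $K_k$ containing $v$ in the original $H$. Once the multipartite structure is established, the remaining optimization over part sizes is a routine convex rearrangement.
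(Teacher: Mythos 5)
The paper offers no proof of this statement---it is quoted directly from Tur\'an's 1941 paper as a classical black box---so there is nothing internal to compare your argument against. Your Zykov symmetrization sketch is one of the standard proofs and is essentially sound: the cloning step is handled correctly (you rightly note that a new $k$-clique through the clone of $v$ would, after substituting $v$ back, yield a $k$-clique in the original graph, since the clone is not joined to $v$), the reduction to a complete multipartite graph with at most $k-1$ parts is right, and the final bound $|E| \leq (1-\frac{1}{k-1})\frac{n^2}{2}$ follows for \emph{any} complete $(k-1)$-partite graph by convexity of $\sum_i {n_i \choose 2}$, so the part-balancing step is only needed to identify the unique extremizer, not for the inequality itself. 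One small point worth spelling out in a full write-up is the three-vertex case ($u \not\sim v$, $v \not\sim w$, $u \sim w$): when $\deg(v) \geq \max(\deg(u),\deg(w))$ you must clone $v$ onto \emph{both} $u$ and $w$, and the edge count then increases by $2\deg(v) - \deg(u) - \deg(w) + 1 > 0$, the $+1$ coming from the lost edge $uw$ being double-counted.

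The ``lower-order $n/(n-1)$ factor'' you mention deserves more than a parenthetical, because it is exactly the gap between what your argument proves and what the statement literally asserts. With the paper's normalization $\rho_2(H) = |E(H)|/{n \choose 2}$, the Tur\'an graph itself satisfies $\rho_2 > 1 - \frac{1}{k-1}$: for $k=3$, the graph $K_{2,2}$ has $\rho_2 = 4/6 = 2/3 > 1/2$ yet contains no triangle. So the theorem as written is false at the boundary, and no proof closes this gap; the correct clean statement is $|E(H)| > (1-\frac{1}{k-1})\frac{n^2}{2}$ implies a $k$-clique, which is precisely what you prove. This imprecision is inherited from the paper's statement rather than introduced by you, and it is immaterial to how the result is used downstream (the algorithm only needs the quantitative Erd\H{o}s-type clique-density bound up to constant factors), but you should state explicitly that you are proving the $n^2/2$-normalized version and that the ${n \choose 2}$-normalized version requires either adjusting the threshold or tolerating the $n/(n-1)$ slack.
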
 

This is tight, as evidenced by the complete $(k-1)$-partite graph
$T_{n,k-1}$ (also called the Tur\'{a}n graph). In a remarkable generalization,
Erd\H{o}s proved that if an $n$-vertex graph has even \emph{one more edge}
than $T_{n,k-1}$, it must contain many $k$-cliques. One can think
of this theorem as a quantified version of Tur\'{a}n's theorem.

\begin{theorem} \label{thm:erdos} (Erd\H{o}s~\cite{E69}) For any graph $H$ over $n$ vertices,
 if $\rho_2(H)>1-\frac{1}{k-1}$, then $H$ contains at least $(n/(k-1))^{k-2}$ $k$-cliques.
\end{theorem}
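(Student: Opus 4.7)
My plan is to prove this quantitative refinement of Tur\'{a}n's theorem by induction on $k$, using a double-counting argument over vertex neighborhoods together with the inductive bound on $(k-1)$-cliques.

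For the base case $k=2$, the hypothesis $\rho_2(H) > 0$ forces at least one edge, which matches $(n/1)^0 = 1$. For the inductive step, assume the bound holds for $k-1$. Given $H$ with $n$ vertices and $\rho_2(H) > 1 - \frac{1}{k-1}$, I would work from the identity
\[
k\cdot|C_k(H)| \;=\; \sum_{v\in V(H)} |C_{k-1}(H[N(v)])|,
\]
which holds because each $k$-clique contributes one $(k-1)$-clique inside the neighborhood of each of its $k$ vertices. If a vertex $v$ has induced-neighborhood edge density exceeding the Tur\'{a}n threshold $1 - \frac{1}{k-2}$ for $(k-1)$-cliques, the inductive hypothesis yields at least $(d(v)/(k-2))^{k-3}$ such cliques. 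Summing these contributions over all ``good'' vertices and applying Jensen's inequality to the convex function $d\mapsto d^{k-3}$, with the average degree $\bar d \geq n(k-2)/(k-1)$ coming from the density hypothesis, I would aim to reach $k\cdot(n/(k-1))^{k-2}$ on the right-hand side.

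The main obstacle is that the global edge-density bound only constrains the \emph{average} degree and the \emph{average} neighborhood-density, so it is not immediate that enough individual neighborhoods $H[N(v)]$ clear the density threshold $\frac{k-3}{k-2}$ needed to invoke the inductive hypothesis. Bridging this gap is the technical heart of the argument and would require a careful convexity/second-moment computation: one controls $\sum_v e(H[N(v)])$ (which is three times the number of triangles in $H$) using a one-level-down version of the same quantitative Tur\'{a}n bound, and then argues that the aggregate $\sum_v |C_{k-1}(H[N(v)])|$ cannot be too concentrated on a small set of vertices with sparse neighborhoods. A potentially cleaner alternative would be Zykov symmetrization: repeatedly replacing $N(u)$ by $N(v)$ for appropriate non-adjacent pairs $u,v$ without increasing $|C_k(H)|$, reducing to a complete multipartite graph whose $k$-clique count can be computed explicitly and compared against $(n/(k-1))^{k-2}$. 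In either approach, the delicate step is showing that the density excess over the Tur\'{a}n threshold is preserved and properly amplified through the recursion on $k$.
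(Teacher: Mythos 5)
You should first note that the paper does not prove this statement at all: it is quoted verbatim as a classical result of Erd\H{o}s~\cite{E69}, and only the easy \Cor{erdos} is derived from it. So the real question is whether your sketch stands on its own, and it does not. You candidly flag the first problem yourself --- the global density hypothesis only controls the \emph{average} neighborhood, so you cannot invoke the inductive hypothesis inside $H[N(v)]$ for enough vertices $v$ --- and neither of your proposed fixes (a second-moment argument, or Zykov symmetrization) is carried out; symmetrization in particular is delicate here, since the standard Zykov step fixes the edge count but does not obviously avoid \emph{decreasing} the number of $K_k$'s, which is the direction you need for a lower bound.

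There is a second, more decisive problem: even in the best imaginable case the arithmetic of your induction does not close. Suppose every vertex were ``good'' and had degree equal to the average $\bar d \approx \frac{(k-2)n}{k-1}$. Then your identity and convexity give
\[
|C_k(H)| \;\geq\; \frac{1}{k}\sum_v \Bigl(\frac{d(v)}{k-2}\Bigr)^{k-3} \;\geq\; \frac{n}{k}\Bigl(\frac{n}{k-1}\Bigr)^{k-3},
\]
whereas the target is $\bigl(\frac{n}{k-1}\bigr)^{k-2} = \frac{n}{k-1}\bigl(\frac{n}{k-1}\bigr)^{k-3}$; you fall short by the factor $\frac{k-1}{k}$, and the strictness of the density hypothesis buys only an $O(1/n)$ improvement, not a constant factor. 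The case $k=3$ (Rademacher's theorem) makes this concrete: your scheme bounds $3|C_3(H)|$ below by the number of vertices whose neighborhood contains an edge, so to reach the required $n/2$ triangles you would need $3n/2$ such vertices out of only $n$. The known proofs avoid exactly this loss by running the recursion on ratios of consecutive clique counts (the Moon--Moser inequality $\frac{k_{s+1}}{k_s} \geq \frac{1}{s^2-1}\bigl(s^2\frac{k_s}{k_{s-1}} - n\bigr)$, proved by counting common neighborhoods of $s$-cliques and applying Cauchy--Schwarz) rather than on neighborhoods of single vertices; if you want a self-contained proof, that is the route to take.
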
 

It will be convenient to express this result in terms on $k$-clique densities.
We introduce some notation: let $f(k) = k^{k-2}/k!$. By Stirling's approximation,
$f(k)$ is well approximated by $e^k/\sqrt{2\pi k^{5}}$. Note that $f(k)$
is some fixed constant, for constant $k$.
This corollary will be critical to our analysis.

\begin{corollary} \label{cor:erdos} For any graph $H$ over $n$ vertices, if $\rho_2(H)>1-\frac{1}{k-1}$, 
then $\rho_k(H) \geq 1/f(k)n^2 $.
\end{corollary}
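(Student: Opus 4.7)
The plan is to combine Erd\H{o}s's theorem (\Thm{erdos}) with elementary binomial coefficient estimates. First, I would invoke \Thm{erdos}, which under the hypothesis $\rho_2(H) > 1 - 1/(k-1)$ immediately gives $|C_k(H)| \geq (n/(k-1))^{k-2}$. Dividing this count by $\binom{n}{k}$ converts it into the $k$-clique density:
\[
\rho_k(H) \;=\; \frac{|C_k(H)|}{\binom{n}{k}} \;\geq\; \frac{n^{k-2}}{(k-1)^{k-2}\,\binom{n}{k}}.
\]

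Next, I would apply the standard bound $\binom{n}{k} \leq n^k/k!$, which follows from $\binom{n}{k} = \frac{n(n-1)\cdots(n-k+1)}{k!} \leq \frac{n^k}{k!}$. Substituting this gives
\[
\rho_k(H) \;\geq\; \frac{k!\,n^{k-2}}{(k-1)^{k-2}\,n^k} \;=\; \frac{k!}{(k-1)^{k-2}\,n^2}.
\]
To match the form in the corollary, I use the trivial inequality $(k-1)^{k-2} \leq k^{k-2}$ to weaken the bound slightly, which yields
\[
\rho_k(H) \;\geq\; \frac{k!}{k^{k-2}\,n^2} \;=\; \frac{1}{f(k)\,n^2},
\]
recalling the paper's definition $f(k) = k^{k-2}/k!$.

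There is no real obstacle here: the statement is essentially a restatement of \Thm{erdos} in the language of densities, and the proof is a one-line calculation once Erd\H{o}s's count is invoked and $\binom{n}{k}$ is upper bounded. The only mild subtlety is interpreting the notation ``$1/f(k)n^2$'' as $1/(f(k)\,n^2)$, which is forced since $\rho_k(H)$ cannot exceed $1$, so $f(k)\,n^2$ must sit in the denominator.
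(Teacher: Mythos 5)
Your proposal is correct and matches the paper's proof essentially step for step: invoke \Thm{erdos}, divide by $\binom{n}{k}$, upper-bound $\binom{n}{k}$ by $n^k/k!$, and absorb $(k-1)^{k-2}\le k^{k-2}$ into $f(k)=k^{k-2}/k!$. Nothing is missing.
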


\begin{proof} 
By \Thm{erdos}, $H$ has at least $(\frac{n}{(k-1)})^{k-2}$ $k$-cliques. Thus,
\begin{equation*}
\rho_k(H) \geq \frac{(\frac{n}{(k-1)})^{k-2}}{{n \choose k}} \geq n^{k-2}/n^k \times k!/(k-1)^{k-2} \geq 1/(f(k) n^2)
\end{equation*}
\end{proof} 

%
%
%
%

\section{Clique shadows} \label{sec:shadows}

A key concept in our algorithm is that of \emph{clique shadows}. 
Consider graph $G = (V,E)$. For any set $S \subseteq V$, we let $C_\ell(S)$
denote the set of $\ell$-cliques contained in $S$.

\begin{definition} \label{def:shadow} A $k$-clique shadow $\bS$ for graph $G$ 
is a multiset of tuples $\{(S_i, \ell_i)\}$ where $S_i \subseteq V$ and $\ell_i \in \NN$
such that: there is a bijection between $C_k(G)$ and $\bigcup_{(S,\ell) \in \bS} C_\ell(S)$.

Furthermore, a $k$-clique shadow $\bS$ is $\gamma$-saturated if $\forall (S,\ell) \in \bS$,
$\rho_\ell(S) \geq \gamma$.
\end{definition}

Intuitively, it is a collection of subgraphs, such that the sum of clique counts within them
is the total clique count of $G$. Note that for each set $S$ in the shadow, the 
associated clique size $\ell$ is different (for different $S$).
Observe that $\{(V,k)\}$ is trivally a clique shadow. But it is highly unlikely to be saturated.

It is important to define the \emph{size} of $\bS$, which is really the storage required
to represent it.

\begin{definition} The representation size of $\bS$ is denoted $\sz{\bS}$,
and is $\sum_{(S,\ell) \in \bS} |S|$.
\end{definition}

\begin{algorithm}
\caption{\sample$(\bS,\gamma,k,\eps,\delta)$ \newline 
$\bS$ is $\gamma$-saturated $k$-clique shadow\newline
$\eps, \delta$ are error parameters }
 For each $(S,\ell) \in \bS$, set $w(S) = {|S|\choose \ell}$ \;
 Set probability distribution $\cD$ over $\bS$ where $p(S) = w(S)/\sum_{(S,\ell) \in \bS} w(S)$ \;
 \label{step:sample} For $r \in 1, 2, \ldots, t = \frac{20}{\gamma\eps^2} \log(1/\delta)$\;
 \ \ \ \ Independently sample $(S,\ell)$ from $\cD$\;
 \ \ \ \ Choose a u.a.r. $\ell$-tuple $A$ from $S$\;
 \ \ \ \ \label{step:X} If $A$ forms $\ell$-clique, set indicator $X_r = 1$. Else, $X_r = 0$ \;
 Output $\frac{\sum_r X_r}{t} \sum_{(S,\ell) \in \bS} {|S| \choose \ell}$ as estimate for $|C_k(G)|$\;
 \end{algorithm}

When a $k$-clique shadow $\bS$ is $\gamma$-saturated, each $(S,\ell) \in \bS$ has many $\ell$-cliques.
Thus, one can employ random sampling within each $S$ to estimate $|C_\ell(S)|$,
and thereby estimate $C_k(G)$. We use a sampling trick to show that we do not
need to estimate all $|C_\ell(S)|$; instead we only need $O(1/\gamma)$ samples
in total. 

 \begin{theorem} \label{thm:sample} Suppose $\bS$ is a $\gamma$-saturated $k$-clique shadow
 for $G$. The procedure \sample$(\bS)$ outputs an estimate $\hat{C}$ such $|\hat{C} - |C_k(G)|| \leq \eps |C_k(G)|$ with probability $> 1- \delta$.

 The running time of \sample$(\bS)$ is $O(\sz{\bS} + \frac{1}{\gamma\eps^2}\log(1/\delta))$.
 \end{theorem}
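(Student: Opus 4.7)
The plan is to show that each trial $X_r$ is a Bernoulli random variable with success probability $p = |C_k(G)|/W$, where $W := \sum_{(S,\ell) \in \bS} \binom{|S|}{\ell}$, and that $\gamma$-saturation forces $p \geq \gamma$. Accuracy then follows from a standard multiplicative Chernoff bound on $t$ i.i.d.\ Bernoulli($p$) trials, and the overall estimator $\hat{C}$ is just the empirical mean scaled up by $W$.

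For the single-trial analysis, I would condition on the pair $(S,\ell)$ drawn from $\cD$ and use that $A$ is a uniform $\ell$-tuple of $S$:
\[
\Pr[X_r = 1] \;=\; \sum_{(S,\ell)\in\bS} \frac{\binom{|S|}{\ell}}{W}\cdot\frac{|C_\ell(S)|}{\binom{|S|}{\ell}} \;=\; \frac{1}{W}\sum_{(S,\ell)\in\bS}|C_\ell(S)| \;=\; \frac{|C_k(G)|}{W},
\]
where the last equality uses the defining bijection in \Def{shadow}. Hence $\hat{C} := (\sum_r X_r/t)\cdot W$ satisfies $\EX[\hat{C}] = |C_k(G)|$, and the event $|\hat{C} - |C_k(G)|| \leq \eps |C_k(G)|$ is exactly the event that the empirical mean of the $X_r$ lies within a $(1\pm\eps)$ multiplicative factor of $p$. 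Saturation now supplies the needed lower bound on $p$: since $|C_\ell(S)| \geq \gamma\binom{|S|}{\ell}$ for every $(S,\ell)\in\bS$, summing termwise yields $|C_k(G)| \geq \gamma W$, so $p \geq \gamma$. Plugging $p \geq \gamma$ and the chosen $t = 20\log(1/\delta)/(\gamma\eps^2)$ into the multiplicative Chernoff bound $\Pr[|\tfrac{1}{t}\sum_r X_r - p| > \eps p] \leq 2\exp(-\eps^2 p t/3)$ drives the failure probability below $\delta$ with room to spare in the constant.

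For the running time, forming the weights and a CDF (or alias table) for $\cD$ takes $O(\sz{\bS})$ time, since each tuple $(S,\ell)$ contributes $O(|S|)$ work and $W$ is accumulated along the way. Each of the $t$ trials then costs one sample from $\cD$ plus an $O(\ell^2) = O(1)$ clique test on the drawn $\ell$-tuple (recall $\ell \leq k$ is constant), giving overall cost $O(\sz{\bS} + t) = O(\sz{\bS} + \log(1/\delta)/(\gamma\eps^2))$. The conceptual crux, and the one step I would be careful about, is the saturation-to-bias reduction: the argument crucially uses that \emph{every} term of $\bS$ is $\gamma$-saturated, not merely some weighted average, so that $\sum |C_\ell(S)| \geq \gamma \sum \binom{|S|}{\ell}$ holds termwise. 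Once $p \geq \gamma$ is established, the remainder is a routine Chernoff calculation together with bookkeeping on the preprocessing cost.
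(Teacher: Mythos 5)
Your proposal is correct and follows essentially the same route as the paper's proof: the same conditioning computation showing $\Pr[X_r=1]=|C_k(G)|/W$, the same termwise use of saturation to get the lower bound $p\geq\gamma$, and the same multiplicative Chernoff bound with the stated $t$. Your additional remarks on the running time (preprocessing the distribution in $O(\sz{\bS})$ and the $O(1)$ clique test per trial) fill in bookkeeping the paper leaves implicit, but the argument is the same.
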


 \begin{proof} We remind the reader that $w(S) = {|S|\choose \ell}$.
 Set $\alpha = |C_k(G)|/\sum_{S \in \bS} w(S)$. Observe that 
 \begin{eqnarray*}
 \Pr[X_r = 1] & = & \sum_{(S,\ell) \in \bS} \Pr[\textrm{$(S,\ell)$ is chosen}] \\
 & \times & \Pr[\textrm{$\ell$-clique chosen in $S$} | \textrm{$(S,\ell)$ is chosen}] 
 \end{eqnarray*}
 The former probability is exactly $w(S)/\sum_{S \in \bS} w(S)$, and the 
 latter is exactly $|C_\ell(S)|/{|S| \choose \ell}$ $=|C_\ell(S)|/w(S)$.
 So, 
 \begin{equation*}
 \Pr[X_r = 1] = \sum_{(S,\ell) \in \bS} |C_\ell(S)|/\sum_{S \in \bS} w(S)
 \end{equation*}
 Since $\bS$ is a $k$-clique shadow, $\sum_{(S,\ell) \in \bS} |C_\ell(S)| = |C_k(G)|$.
 Thus, $\Pr[X_r = 1] = \alpha$.
 By the saturation property, $\rho_\ell(S) \geq \gamma$, equivalent to 
 $|C_\ell(S)| \geq \gamma w(S)$. So $\sum_{S \in \bS} |C_\ell(S)| \geq
 \gamma \sum_{S \in \bS} w(S)$. That implies that $\alpha \geq \gamma$.
 By linearity of expectation, $\EX[\sum_{r \leq t} X_r] = \sum_{r\leq t} \EX[X_r] \geq \gamma t$.

 Note that all the $X_r$s come from independent trials. (The graph structure plays no role,
 since the distribution of each $X_r$ does not change upon conditioning on the other $X_r$s.)
 By a multiplicative Chernoff bound (Thm 1.1 of~\cite{DuPa09}), 
 \begin{eqnarray*}
 & & \Pr[\sum_r X_r/t \leq \alpha(1-\eps)] \leq \exp(-\eps^2 \EX[\sum_r X_r]/3) \\
 & \leq & \exp(-\eps^2 \gamma t/3) = \exp(-5\log(1/\delta)) \leq \delta/5.
 \end{eqnarray*}
 By an analogous upper tail bound, $\Pr[\sum_r X_r/t \geq \alpha(1+\eps)] \leq \delta/5$.
 By the union bound, with probability at least $1-2\delta/5$, 
 $\alpha(1-\eps) \leq \sum_r X_r/t \leq \alpha(1+\eps)$.
 Note that the output $\hat{C} = (\sum_r X_r/t) \sum_{S \in \bS} w(S)$.
 We multiply the bound above on $\sum_r X_r/t$ by $\sum_{S \in \bS} w(S)$,
 and note that $\alpha \sum_{S \in \bS} w(S) = |C_k(G)|$ to complete the proof.
 \end{proof}

 We stress the significance of \Thm{sample}. Once we get a $\gamma$-saturated clique shadow $\bS$,
 $|C_k(G)|$ can be approximated in time \emph{linear} in $\sz{\bS}$. The number of samples
 chosen only depends on $\gamma$ and the approximation parameters, not on the graph size. 

 But how to actually generate a saturated clique shadow? Saturation
 appears to be extremely difficult to enforce. This is where the theorem of Erd\H{o}s (\Thm{erdos})
 saves the day. It merely suffices to make the edge density of each set in the clique shadow
 high enough. The $k$-clique density \emph{automatically} becomes large enough.

 \begin{theorem} \label{thm:sat-shadow} Consider a $k$-clique shadow $\bS$ 
 such that $\forall (S,\ell) \in \bS$, $\rho_2(S) > 1-\frac{1}{\ell-1}$.
 Let $\gamma = 1/\max_{(S,\ell)\in \bS} f(\ell)|S|^2$.
 Then, $\bS$ is $\gamma$-saturated.
 \end{theorem}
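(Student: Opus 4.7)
The plan is to reduce the statement to a direct application of \Cor{erdos}, which was the quantitative form of Tur\'{a}n's theorem established earlier in \Sec{prelim}. The hypothesis on each $(S,\ell)\in\bS$ is precisely the edge-density condition needed to invoke that corollary (with the role of $k$ played by $\ell$ and $n$ by $|S|$), so the work reduces to arithmetic bookkeeping against the definition of $\gamma$.

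Concretely, I would fix an arbitrary pair $(S,\ell)\in\bS$ and proceed as follows. First, I note that the induced subgraph on $S$ has $|S|$ vertices and, by hypothesis, $\rho_2(S)>1-\tfrac{1}{\ell-1}$. Applying \Cor{erdos} with this subgraph immediately yields the clique-density bound $\rho_\ell(S)\ge 1/(f(\ell)\,|S|^2)$. Second, by the definition $\gamma=1/\max_{(S',\ell')\in\bS} f(\ell')|S'|^2$, the denominator on the right-hand side of the Erd\H{o}s bound for this particular $(S,\ell)$ is no larger than the maximizer, hence $1/(f(\ell)|S|^2)\ge \gamma$. Chaining the two inequalities gives $\rho_\ell(S)\ge \gamma$. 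Since $(S,\ell)$ was arbitrary, the saturation condition in \Def{shadow} is satisfied.

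There is no real obstacle: once \Cor{erdos} is available, this claim is essentially a definitional unpacking. The only subtlety worth flagging explicitly in the write-up is that $\gamma$ is defined via a maximum over \emph{all} elements of the shadow, so the bound derived for a single $(S,\ell)$ is automatically at least as strong as $\gamma$ rather than just matching the Erd\H{o}s bound pointwise; this is what makes a single scalar $\gamma$ valid as a uniform saturation parameter across the whole multiset $\bS$. I would keep the proof to two or three lines, explicitly citing \Cor{erdos} and \Def{shadow}.
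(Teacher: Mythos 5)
Your proposal is correct and matches the paper's own proof: both apply \Cor{erdos} to each $(S,\ell)\in\bS$ to get $\rho_\ell(S)\ge 1/(f(\ell)|S|^2)$ and then observe that $\gamma$, being the minimum of these pointwise bounds, serves as a uniform saturation parameter. No further comment is needed.
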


 \begin{proof} By \Cor{erdos}, for every $(S,\ell) \in \bS$,
 $\rho_\ell(S) \geq 1/(f(\ell) |S|^2)$. We simply set $\gamma$ to
 be the minimum such density over all $(S,\ell) \in \bS$.
\end{proof}

\section{Constructing saturated clique shadows} \label{sec:const}

We use a refinement process to construct saturated clique shadows.
We start with the trivial shadow $\bS = \{(V,k)\}$ and iteratively
``refine" it until the saturation property is satisfied. By \Thm{sat-shadow},
we just have to ensure edge densities in each set are sufficiently large.

For any set $S \subset V$, let $G|_S$ be the subgraph of $G$ induced by $S$. Given an unsaturated $k$-clique shadow $\bS$, we find some $(S,\ell) \in \bS$
such that $\rho_2(S) \leq 1-\frac{1}{\ell-1}$. By iterating over the vertices,
we replace $(S,\ell)$ by various neighborhoods in $G|_S$ to get a new shadow.
We would like the edge densities of these neighborhoods to increase,
in the hope of crossing the threshold given in \Thm{sat-shadow}. 

The key insight is to use the \emph{degeneracy ordering} to construct
specific neighborhoods of high density that also yield a valid shadow.
This is basically the classic graph theoretic technique of computing
core decompositions, which is widely used in large-graph analysis~\cite{SEIDMAN83, GiChMa14}.
As mentioned earlier, this idea is used for fast clique counting as well~\cite{ChNi85,FFF15}.

\begin{definition} \label{def:degen} For a (labeled) graph $G = (V,E)$, a
\emph{degeneracy ordering} is a permutation of $V$ given as $v_1, v_2, \ldots, v_n$
such that: for each $i \leq n$, $v_i$ is the minimum degree
vertex in the subgraph induced by $v_i, v_{i+1}, \ldots, v_n$.
(As defined, this ordering is not unique, but we can enforce
uniqueness by breaking ties by vertex id.)

The degree of $v_i$ in $G|_{\{v_i,\ldots,v_n\}}$ is the core number
of $v_i$. The largest core number is called the \emph{degeneracy}
of $G$, denoted $\alpha(G)$.

The \emph{degeneracy DAG} of $G$,
denoted $D(G)$ is obtained by orienting edges in degeneracy order.
In other words, every edge $(u,v) \in G$ is directed from
lower to higher in the degeneracy ordering.
\end{definition}

The degeneracy ordering is the deletion time of the standard
linear time procedure that computes the degeneracy ~\cite{MB83}.
It is convenient for us to think of the degeneracy in terms of graph
orientations. As defined earlier, any permutation on $V$ can be used to make a DAG out of $G$.
We use this idea for generating saturated clique shadows.
Essentially, while $G$ may be sparse, \emph{out-neighborhoods} in $G$
are typically dense. (This has been observed in numerous results
on dense subgraph discovery~\cite{AnCh09,Tsourakakis13,SaSePi14}.)

We now define the procedure \shadow$(G,k)$, which works
by a simple, iterative refinement procedure. Think of $\bT$
as the current working set, and $\bS$ as the final output.
We take a set $(S,\ell)$ in $\bT$, and construct all
outneighborhoods in the degeneracy DAG. Any such
set whose density is above the Tur\'{a}n threshold goes to $\bS$ (the output),
otherwise, it goes to $\bT$ (back to the working set).

\begin{algorithm}
\caption{\shadow$(G,k)$}
 Initialize $\bT = \{(V,k)\}$ and $\bS = \emptyset$\;
 While $\exists (S,\ell) \in \bT$ such that $\rho_2(S) \leq 1 - \frac{1}{\ell-1}$\;
 \ \ \ \ Construct the degeneracy DAG $D(G|_S)$\;
 \ \ \ \ Let $N^+_s$ denote the outneighborhood (within $D(G|_S)$) of $s \in S$\;
 \ \ \ \ Delete $(S,\ell)$ from $\bT$\;
 \ \ \ \ For each $s \in S$\;
 \ \ \ \ \ \ \ If $\ell \leq 2$ or $\rho_2(N^+_s) > 1 - \frac{1}{\ell-2}$\;
 \ \ \ \ \ \ \ \ \ \label{step:add} Add $(N^+_s,\ell-1)$ to $\bS$\; 
 \ \ \ \ \ \ \ \label{step:move} Else, add $(N^+_s,\ell-1)$ to $\bT$\; 
 Output $\bS$\;
 \end{algorithm}

It is useful to define the \emph{recursion tree} $\cT$ of this process as
follows. Every pair $(S,\ell)$ that is ever part of $\bT$ is a node in $\cT$.
The children of $(S,\ell)$ are precisely the pairs $(N^+_s,\ell-1)$ added
in \Step{add}. (At the point, $(S,\ell)$ is deleted from $\bT$, and all
the $(N^+_s,\ell-1)$ are added.) Observe that the root of $\bT$ is $(V,k)$,
and the leaves are precisely the final output $\bS$.

\begin{theorem} \label{thm:shadow-output} The output $\bS$ of \shadow$(G,k)$
is a $\gamma$-saturated $k$-clique shadow, where $\gamma = 1/\max_{(S,\ell) \in \bS} (f(\ell) |S|^2)$.
\end{theorem}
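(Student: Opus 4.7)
The plan is to split the theorem into two assertions and prove them in the order: (a) $\bS$ is a valid $k$-clique shadow of $G$, i.e., there is a bijection between $C_k(G)$ and $\bigsqcup_{(T,\ell') \in \bS} C_{\ell'}(T)$; and (b) $\bS$ is $\gamma$-saturated for the claimed value of $\gamma$. Assertion (b) will follow immediately from Theorem~\ref{thm:sat-shadow} once (a) is in hand, so the substance lies in (a).

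For (a) I would argue by maintaining, as a loop invariant, that $\bT \cup \bS$ is at all times a $k$-clique shadow of $G$. Initially this holds trivially since $\bT \cup \bS = \{(V,k)\}$. The only nontrivial step is a single iteration: removing some $(S,\ell)$ from $\bT$ and replacing it by the collection $\{(N^+_s,\ell-1) : s \in S\}$ (each copy going to either $\bT$ or $\bS$, which is immaterial for the invariant). To preserve the invariant it suffices to exhibit a bijection
\[
C_\ell(S) \;\longleftrightarrow\; \bigsqcup_{s \in S} C_{\ell-1}(N^+_s).
\]
This is the classical Chiba--Nishizeki observation applied to $D(G|_S)$: every $\ell$-clique $K \subseteq S$ has a unique vertex $s^\star \in K$ that is earliest in the degeneracy order of $G|_S$, and the remaining $\ell-1$ vertices of $K$ all lie in $N^+_{s^\star}$ and form an $(\ell-1)$-clique there; conversely, any $(\ell-1)$-clique $K' \subseteq N^+_s$ yields an $\ell$-clique $K' \cup \{s\}$ with $s$ as its earliest vertex. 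The two maps are inverses, giving the bijection. Iterating the invariant to termination—which occurs in at most $k-1$ levels of recursion because each child decreases $\ell$ and any child with $\ell-1 \leq 1$ is diverted to $\bS$ by Step~\ref{step:add}—the leaves of $\cT$, which constitute the final $\bS$, form a $k$-clique shadow of $G$.

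For (b) I examine the reason each $(T,\ell')$ is placed in $\bS$. By the condition guarding Step~\ref{step:add}, either (i) $\ell' = \ell - 1 \leq 1$, in which case $\rho_{\ell'}(T) = 1$ trivially (every vertex is a $1$-clique), and the saturation bound is met since $\gamma \leq 1$; or (ii) the guard $\rho_2(N^+_s) > 1 - \tfrac{1}{\ell-2}$ fired, i.e., $\rho_2(T) > 1 - \tfrac{1}{\ell'-1}$. In case (ii) the Turán edge-density condition of Theorem~\ref{thm:sat-shadow} (equivalently Corollary~\ref{cor:erdos}) applies, yielding $\rho_{\ell'}(T) \geq 1/(f(\ell')|T|^2)$. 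Taking the worst such bound over $(T,\ell') \in \bS$ gives $\gamma = 1/\max_{(S,\ell)\in\bS}f(\ell)|S|^2$ as required.

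The main obstacle I anticipate is being careful with the bijection in (a): one has to verify that the ``earliest vertex'' map is well-defined (it is, because the degeneracy ordering restricted to $S$ is a total order on $S$), and that applying $D(G|_S)$ rather than the global DAG $D(G)$ does not alter correctness at deeper levels of the recursion (it does not, because at each node of $\cT$ we only need a bijection at that single refinement step, and the induced subgraph $G|_S$ carries its own well-defined degeneracy order). The low-$\ell'$ edge cases ($\ell' \in \{1,2\}$) are the other place to be careful: they are handled cleanly because $\rho_1 \equiv 1$ makes saturation free, and the loop guard $\rho_2(S) \leq 1 - \tfrac{1}{\ell-1}$ at $\ell = 2$ reads $\rho_2(S) \leq 0$, so only edgeless sets are ever refined at this level, making their empty children consistent with the bijection in (a).
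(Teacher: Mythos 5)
Your proposal is correct and follows essentially the same route as the paper: the loop invariant that $\bT \cup \bS$ remains a $k$-clique shadow, established via the Chiba--Nishizeki bijection between $C_\ell(S)$ and $\bigcup_{s \in S} C_{\ell-1}(N^+_s)$ using the earliest vertex in the degeneracy order of $G|_S$, followed by an appeal to Theorem~\ref{thm:sat-shadow} for saturation. Your explicit handling of the $\ell \leq 2$ branch of Step~\ref{step:add} is in fact slightly more careful than the paper's proof, which asserts that every $(S,\ell) \in \bS$ satisfies the Tur\'an density condition without addressing that degenerate case.
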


\begin{proof} We first 
prove by induction the following loop invariant for \shadow: $\bT \cup \bS$ is always
a $k$-clique shadow. For the base case, note that 
at the beginning, $\bT = \{(V,k)\}$ and $\bS = \emptyset$. 
For the induction step, assume that $\bT \cup \bS$ is a $k$-clique
shadow at the beginning of some iteration. The element $(S,\ell)$
is deleted from $\bT$. Each $(N^+_s,\ell-1)$ is added to $\bS$
or to $\bT$. 

Thus, it suffices to prove that there is a bijection mapping between
$C_\ell(S)$ and $\bigcup_{s \in S} C_{\ell-1}(N^+_s)$. (By the induction
hypothesis, we can then construct a bijection between $C_k(G)$ and the appropriate
cliques in $\bT \cup \bS$.) Consider an $\ell$-clique $K$ in $S$.
Set $s$ to be the minimum vertex according to the degeneracy ordering
in $D(G|_S)$. Observe that the remaining vertices form an $(\ell-1)$-clique in $N^+_s$,
which we map the $K$ to. This is a bijection, because every clique $K$ can be mapped
to a (unique) $(\ell-1)$-clique, and furthermore, every $(\ell-1)$-clique in $\bigcup_{s \in S}
C_{\ell-1}(N^+_s)$ is in the image of this mapping.

 Thus, when \shadow{} terminates, $\bT \cup \bS$ is a $k$-clique shadow.
 Since $\bT$ must be empty, $\bS$ is a $k$-clique shadow. Furthermore,
 a pair $(S,\ell)$ is in $\bS$ iff $\rho_2(S) > 1 - \frac{1}{\ell-1}$.
 By \Thm{sat-shadow}, $\bS$ is $1/\max_{(S,\ell) \in \bS} (f(\ell)|S|^2)$-saturated.
 \end{proof}

We have a simple, but important claim that bounds the size
of any set in the shadow by the degeneracy.

 \begin{claim} \label{clm:degen} Consider non-root $(S,\ell) \in \cT$.
 Then $|S| \leq \alpha(G)$.
 \end{claim}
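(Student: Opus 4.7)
The plan is to combine two standard facts about degeneracy: (i) in a degeneracy DAG $D(H)$, every vertex has out-degree at most $\alpha(H)$, and (ii) degeneracy is monotone under taking induced subgraphs, i.e., for any $S \subseteq V$, $\alpha(G|_S) \leq \alpha(G)$. Both facts follow quickly from the definitions but neither is stated in the excerpt, so I would state them before using them.

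First I would unpack the structure of a non-root node. By construction of $\cT$, every non-root $(S',\ell')$ arises as a child of some $(S,\ell)$ in \shadow, which means $S' = N^+_s$ for some $s \in S$, where $N^+_s$ is the out-neighborhood of $s$ inside the degeneracy DAG $D(G|_S)$. Hence $|S'|$ equals the out-degree of $s$ in $D(G|_S)$.

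Next I would bound this out-degree. In the degeneracy ordering $v_1,\dots,v_{|S|}$ of $G|_S$, each $v_i$ is a minimum-degree vertex in the induced subgraph on $\{v_i,\dots,v_{|S|}\}$, so its degree there is at most $\alpha(G|_S)$. Since the out-neighbors of $v_i$ in $D(G|_S)$ are exactly the neighbors of $v_i$ in this remaining subgraph, we get $|N^+_s| \leq \alpha(G|_S)$ for every $s \in S$.

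Finally I would apply subgraph monotonicity of degeneracy. Since $G|_S$ is an induced subgraph of $G$, every subgraph of $G|_S$ is also a subgraph of $G$, so $\alpha(G|_S) \leq \alpha(G)$. Chaining the two inequalities yields $|S'| \leq \alpha(G|_S) \leq \alpha(G)$, as claimed. There is no real obstacle here; the only subtlety is noticing that the claim excludes the root (which could have $|V|$ vertices) precisely because the degeneracy bound kicks in only after the first refinement step, when one has descended into an out-neighborhood.
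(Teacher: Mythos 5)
Your proof is correct and follows essentially the same route as the paper's: identify a non-root node as an out-neighborhood in the degeneracy DAG of its parent's induced subgraph, bound its size by the degeneracy of that subgraph, and conclude via monotonicity of degeneracy under induced subgraphs. You simply spell out the out-degree bound in more detail than the paper does.
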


\begin{proof} Suppose the parent of $(S,\ell)$ is $(P,\ell+1)$. 
Observe that $S$ is the outneighborhood of some node $p$
in the DAG $D(G|_P)$. Thus, $|S| \leq \alpha(G|_P)$. The 
degeneracy can never be larger in a subgraph. (This is apparent
by an alternate definition of degeneracy, the maximum smallest degree of an induced
subgraph~\cite{MB83}.) Hence, $\alpha(G|_P) \leq \alpha(G)$.
\end{proof}

 \begin{theorem} \label{thm:shadow-time} The running time of \shadow$(G,k)$
 is $O(\alpha(G)\sz{\bS}+m+n)$. The total storage is $O(\sz{\bS}+m+n)$.
 \end{theorem}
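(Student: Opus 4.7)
The plan is to analyze the recursion tree $\cT$ of \shadow, relying on two structural facts: by \Clm{degen} every non-root $v = (S_v, \ell_v) \in \cT$ has $|S_v| \le \alpha(G)$, and $\cT$ has depth at most $k-2$, since $\ell$ strictly decreases along every edge and never drops below $2$. First I would do $O(m+n)$ preprocessing to compute the degeneracy ordering of $G$ and store $G$ in a hash-based adjacency structure. Then I would account per-node work as follows. At an internal $v$, the algorithm constructs $D(G|_{S_v})$ in $O(|S_v| + |E(G|_{S_v})|) = O(|S_v|^2)$ time, enumerates all out-neighborhoods in total $O(|E(G|_{S_v})|) = O(|S_v|^2)$ time, and checks the density condition for each child $(N^+_s, \ell_v-1)$ at cost $O(|N^+_s|^2)$ via pairwise hash-set lookups. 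The crucial step is to re-charge each $O(|N^+_s|^2)$ density-check cost from the parent to the child itself. Under this scheme, every non-root node $c$ is charged at most $O(|S_c|^2)$ overall (parent's density check on it, plus $c$'s own DAG/out-neighborhood work if $c$ is internal). By \Clm{degen} this is at most $O(\alpha(G) \cdot |S_c|)$, and the root retains only the $O(m+n)$ preprocessing cost.

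The remaining task is to show $\sum_{v \neq \text{root}} |S_v| = O(\sz{\bS})$, which is the main obstacle. Let $I$ and $L = |\bS|$ denote the numbers of internal nodes and leaves of $\cT$. Every internal node has exactly $|S_v|$ children, and the total number of non-root nodes is $I + L - 1$, so $\sum_{v \text{ internal}} |S_v| = I + L - 1$. To bound $I$, I would double-count (non-root-internal, leaf-descendant) pairs: since the depth is at most $k-2$, any leaf has at most $k-3$ non-root internal ancestors, while every non-root internal node has at least one leaf descendant. This yields $I - 1 \le (k-3) L$, so $I = O(L)$ for constant $k$. Thus $\sum_{v \text{ internal}} |S_v| = O(L)$, and peeling off the root gives $\sum_{v \neq \text{root}} |S_v| = O(L) + \sz{\bS} - n$. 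Non-negativity of the sum together with $L \le \sz{\bS}$ then forces $n = O(\sz{\bS})$, and in particular $m = \sum_s |N^+_s| \le \sum_{v \neq \text{root}} |S_v| = O(\sz{\bS})$ — this is exactly what makes the $O(\alpha m)$ worth of density checks at the root's children fit into the budget. Multiplying by the $O(\alpha(G))$ per-node charge gives total runtime $O(\alpha(G) \sz{\bS} + m + n)$.

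For storage, I would traverse $\cT$ in depth-first order so that only the current root-to-current-node path is materialized at any time. The path has length at most $k-2 = O(1)$; each path node carries $O(|S_v|)$ data plus a small iterator index for which child is next, and this live intermediate storage is absorbed by the $O(m+n)$ cost of keeping $G$ and its DAG in memory. Combined with the accumulated output $\bS$ of size $O(\sz{\bS})$, the total storage is $O(\sz{\bS} + m + n)$. The main subtlety — and the main obstacle — is the depth-based counting in the second paragraph: without exploiting that $k$ is constant, the non-leaf portion of $\cT$ could in principle outweigh the leaf portion, which would break both the $O(\alpha(G) \sz{\bS})$ runtime bound and the implicit consequences $n, m = O(\sz{\bS})$ that the proof silently depends on.
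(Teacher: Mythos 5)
Your proof is correct and follows essentially the same route as the paper's: charge $O(|S|^2)=O(\alpha(G)|S|)$ to each non-root node of $\cT$ via \Clm{degen}, then reduce everything to showing $\sum_{v}|S_v|=O(\sz{\bS})$ by identifying $\sum_{v\ \mathrm{internal}}|S_v|$ with the number of edges of $\cT$. The one place you genuinely diverge is in bounding the number of internal nodes by the number of leaves: the paper notes that every internal node has at least two children, so the edge count of $\cT$ is at most $2|\bS|$, whereas you double-count (non-root internal node, leaf descendant) pairs using the depth bound $O(k)$ --- this costs a factor of $k$ (harmless, since $k$ is constant throughout) but avoids any assumption on the branching factor. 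Your explicit observation that $n,m=O(\sz{\bS})$ is needed to absorb the root-level work is a nice clarification; it is implicit in the paper's accounting, since the root's size $n$ and its children's total size $m=\sum_s|N^+_s|$ both sit inside the non-leaf and non-root sums that get bounded by $O(\sz{\bS})$.
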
 

 \begin{proof} Every time we add $(N^+_S,\ell-1)$ (\Step{add}) to $\bT$,
 we explicitly construct the graph $G|_{N^+_s}$. Thus, we can guarantee
 that for every $(S,\ell)$ present in $\bT$, we can make queries in the graph $G|_S$.
 This construction takes $O(|S|^2)$ time, to query every pair in $S$. (This 
 is \emph{not} required when $S = V$, since $G|_V = G$.) Furthermore,
 this construction is done for every $(S,\ell) \in \cT$, except for the root node in $\cT$.
 Once we have $G|_S$, the degeneracy order can be computed in time
 linear in the number of edges in $G|_S$ ~\cite{MB83}. 

 Thus, the running time can be bounded by $O(\sum_{(S,\ell) \in \cT: S \neq V} |S|^2 + m + n)$.
 By \Clm{degen}, we can bound $\sum_{(S,\ell) \in \cT: S \neq V} |S|^2 = O(\alpha(G) \sum_{(S,\ell) \in \cT} |S|)$. 
 We split the sum over leaves and non-leaves. The sum over leaves is precisely
 a sum over the sets in $\bS$, so that yields $O(\alpha(G)\sz{\bS})$.
 It suffices to prove that $\sum_{(S,\ell) \in \cT: S \ \textrm{non-leaf}} |S| = O(\sz{\bS})$,
 which we show next.

 Observe that a non-leaf node $(S,\ell)$ in $\cT$ has exactly $|S|$ children, one for each
 vertex $s \in S$. Thus, 
 \begin{eqnarray*}
 \sum_{(S,\ell) \in \cT: (S,\ell) \textrm{non-leaf}} |S| & = & \sum_{(S,\ell) \in \cT} 
 \textrm{\# children of $(S,\ell)$} \\
 & = & \textrm{\# edges in $\cT$} 
 \end{eqnarray*}
 All internal nodes in $\cT$ have at least $2$ children, so the number of edges in $\cT$
 is at most twice the number of leaves in $\cT$. But this is exactly the number
 of sets in the output $\cS$, which is at most $\sz{\bS}$. 

 The total storage is $O(\sum_{(S,\ell) \in \cT} |S| + m + n)$, which
 is $O(\sz{S} + m + n)$ by the above arguments.
 \end{proof}

We now formally define the Tur\'{a}n shadow to be output of this procedure.

\begin{definition} \label{def:turan-shadow} The $k$-clique Tur\'{a}n shadow
of $G$ is the output of \shadow$(G,k)$.
\end{definition}

 \subsection{Putting it all together} \label{sec:together}

 \begin{algorithm}
 \caption{\mainalg$(G,k,\eps,\delta)$}
 Compute $\bS = $ \shadow$(G,k)$\;
 \label{step:gamma} Set $\gamma = 1/\max_{(S,\ell) \in \bS} (f(\ell) |S|^2)$\;
 Output $\hat{C}_k = \sample(G,k,\gamma,\eps,\delta)$\;
 \end{algorithm}

 \begin{theorem} \label{thm:main-full} Consider graph $G=(V,E)$ with $m$ edges,
 $n$ vertices, and degeneracy $\alpha(G)$. Assume $m \leq n^2/4$. 
 Let $\bS$ be the Tur\'{a}n $k$-clique shadow of $G$.

 With probability at least $1-\delta$
 (this probability is over the randomness of \mainalg; there
 is no stochastic assumption on $G$), $|\hat{C}_k - |C_k(G)|| \leq \eps |C_k(G)|$.

 The running time of \mainalg{} is $O(\alpha(G)\sz{\bS} + f(k)m\log(1/\delta)/\eps^2 + n)$ and
 the total storage is $O(\sz{\bS} + m + n)$.
 \end{theorem}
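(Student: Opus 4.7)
The plan is to chain the guarantees already proved in Sections \ref{sec:shadows} and \ref{sec:const}. For correctness, \Thm{shadow-output} gives that the output $\bS$ of \shadow$(G,k)$ is a $\gamma$-saturated $k$-clique shadow, where $\gamma$ is precisely the quantity set in \Step{gamma} of \mainalg. Feeding $(\bS,\gamma)$ into \Thm{sample} then yields $|\hat C_k - |C_k(G)|| \le \eps\,|C_k(G)|$ with probability at least $1-\delta$. For the time bound, I would add the bounds of \Thm{shadow-time}, namely $O(\alpha(G)\sz{\bS}+m+n)$, and \Thm{sample}, namely $O(\sz{\bS}+\tfrac{1}{\gamma\eps^2}\log(1/\delta))$. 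The storage claim follows immediately: \sample only additionally stores the weights $\{w(S)\}_{(S,\ell)\in\bS}$ and the sampling distribution, contributing $O(|\bS|)\le O(\sz{\bS})$ on top of what \shadow already keeps.

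The entire argument thus reduces to the single estimate $1/\gamma = \max_{(S,\ell)\in\bS}f(\ell)|S|^2 = O(f(k)m)$. Once this is in hand, the sampling term becomes $O(f(k)m\log(1/\delta)/\eps^2)$ and absorbs the stray $m$ coming from the shadow construction. Since every $(S,\ell)\in\bS$ satisfies $\ell\le k$ and $f(\ell)=\ell^{\ell-2}/\ell!$ is non-decreasing on integers $\ell\ge 2$, the factor $f(\ell)\le f(k)$ is automatic; the actual work is in controlling $|S|^2$.

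The main obstacle is therefore the $|S|^2$ bound, which I would handle by case-splitting on whether the root $(V,k)$ survives into $\bS$. If it does not, every $(S,\ell)\in\bS$ is a non-root node of the recursion tree $\cT$, so \Clm{degen} gives $|S|\le\alpha(G)$; combined with the classical estimate $\alpha(G)\le\sqrt{2m}$ (any subgraph witnessing degeneracy $\alpha$ must contain at least $\binom{\alpha+1}{2}$ edges), this yields $|S|^2\le 2m$ uniformly over $\bS$. If instead $(V,k)\in\bS$, then by the construction of \shadow{} we must have $\rho_2(V)>1-\tfrac{1}{k-1}$, forcing $m > \tfrac14\,n(n-1)$ for $k\ge 3$; combined with the hypothesis $m\le n^2/4$ this pins $m=\Theta(n^2)$, so $|V|^2 = n^2 = O(m)$. (The degenerate case $k=2$ is simply edge counting and can be set aside.) In both branches $1/\gamma = O(f(k)m)$, which closes the running-time estimate and completes the proof.
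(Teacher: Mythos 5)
Your proposal is correct and follows essentially the same route as the paper: chain \Thm{shadow-output}, \Thm{sample}, and \Thm{shadow-time}, and reduce everything to the single estimate $1/\gamma = \max_{(S,\ell)\in\bS} f(\ell)|S|^2 = O(f(k)m)$ via \Clm{degen}, the bound $\alpha(G)\le\sqrt{2m}$, and the monotonicity of $f$. The only difference is in handling the root: the paper invokes $m\le n^2/4$ to argue that $(V,k)$ cannot survive into $\bS$, whereas you case-split and observe that a surviving root would have Tur\'an-level edge density, forcing $n^2=O(m)$ anyway --- a slightly more robust treatment of the same point.
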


\begin{proof} By \Thm{shadow-output}, $\bS$ is $\gamma$-saturated,
for $\gamma = 1/\max_{(S,\ell) \in \bS} f(\ell)|S|^2$.
Since $m \leq n^2/4$, the procedure \shadow$(G,k)$ cannot
just output $\{(V,k)\}$. All leaves in the recursion tree must have depth at least $2$,
and by \Clm{degen}, for all $(S,\ell) \in \bS$, $|S| \leq \alpha(G)$.
A classic bound on the degeneracy asserts that $\alpha(G) \leq \sqrt{2m}$ (Lemma 1 of~\cite{ChNi85}).
Since $f(\ell)$ is increasing in $\ell$, 
$\max_{(S,\ell) \in \bS} f(\ell) |S|^2 \leq 2f(k)m$. Thus, $\gamma = \Omega(1/(f(k)m))$.

By \Thm{sample}, the running time of \sample{} is $O(\sz{\bS} + \log(1/\delta)/(\gamma\eps^2))$,
which is $O(\sz{\bS} + f(k)m\log(1/\delta)/\eps^2)$.
\Thm{sample} also asserts the accuracy of the output.
Adding the bounds of \Thm{shadow-time}, we prove the running time and storage bounds.
\end{proof}

\subsection{The shadow size} \label{sec:shadow}

The practicality of \mainalg{} hinges on $\sz{\bS}$ being small.
It is not hard to prove a worst-case bound, using the degeneracy.

\begin{claim} \label{clm:size} $\sz{\bS} = O(n\alpha(G)^{k-2})$.
\end{claim}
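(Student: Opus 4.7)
The plan is to view $\bS$ as the set of leaves of the recursion tree $\cT$ of \shadow$(G,k)$, and then to bound $\sz{\bS} = \sum_{(S,\ell) \in \bS} |S|$ by a level-by-level counting argument in $\cT$. Write $\alpha = \alpha(G)$ throughout.

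First I would invoke \Clm{degen} to conclude that every non-root node $(S,\ell) \in \cT$ has $|S| \leq \alpha$. Since the children of any internal node $(S,\ell) \in \cT$ correspond bijectively to the vertices of $S$ (one out-neighborhood per $s \in S$), this immediately yields a branching bound: the root $(V,k)$ has at most $n$ children, and every non-root internal node has at most $\alpha$ children. A straightforward induction on depth then shows that $\cT$ contains at most $n \cdot \alpha^{d-1}$ nodes at every depth $d \geq 1$.

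The step I expect to be the main obstacle is shaving one factor of $\alpha$ off the naive bound, since summing over all depths up to $k-1$ only gives $O(n\alpha^{k-1})$. The key observation is that any leaf at depth $k-1$ in $\cT$ has $\ell = 1$, so its parent in the tree is some $(S, 2)$ that was an element of $\bT$. By the loop invariant of \shadow{} (every element processed out of $\bT$ fails the density threshold that gates entry into $\bS$), this parent satisfies $\rho_2(S) \leq 1 - \tfrac{1}{2-1} = 0$, so $G|_S$ is edgeless. Hence every out-neighborhood $N^+_s$ in the degeneracy DAG of $G|_S$ is empty, and all depth-$(k-1)$ leaves contribute $0$ to $\sz{\bS}$.

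It follows that only leaves at depths $d \in \{1, 2, \ldots, k-2\}$ contribute to $\sz{\bS}$, each of size at most $\alpha$. Summing the node-count bound from the first step against the per-leaf size bound,
\begin{equation*}
\sz{\bS} \;\leq\; \sum_{d=1}^{k-2} \bigl(n \cdot \alpha^{d-1}\bigr) \cdot \alpha \;=\; n \sum_{d=1}^{k-2} \alpha^d \;=\; O\bigl(n \alpha^{k-2}\bigr),
\end{equation*}
which is the claimed bound. (The root itself contributes $n = O(n\alpha^{k-2})$ if it happens to be a leaf, so it can be absorbed into the same estimate.)
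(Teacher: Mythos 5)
Your proof is correct, and it follows the same recursion-tree strategy as the paper: bound the branching of $\cT$ by $n$ at the root and by $\alpha(G)$ elsewhere via \Clm{degen}, and use the fact that $\ell$ drops by one per level so the depth is at most $k-1$. The one place you genuinely diverge is in how the last factor of $\alpha$ is shaved off. The paper asserts that $\sz{\bS}$ is at most the number of edges of $\cT$ (citing the accounting in the proof of \Thm{shadow-time}) and then counts edges level by level, giving $n(1+\alpha+\cdots+\alpha^{k-2}) = O(n\alpha^{k-2})$ directly. You instead bound the leaf sizes themselves, which naively costs an extra factor of $\alpha$, and recover it by observing that any leaf at depth $k-1$ has a parent $(S,2)$ that was refined only because $\rho_2(S)\le 1-\tfrac{1}{2-1}=0$, so $G|_S$ is edgeless and those leaves are empty sets. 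Both routes reach the same bound, but yours is arguably the more self-contained: the identity actually established inside \Thm{shadow-time} equates the edge count of $\cT$ with $\sum |S|$ over \emph{non-leaves}, not over the leaves that constitute $\bS$, so the paper's first step is really a sketch, whereas your empty-last-level observation makes the loss of exactly one $\alpha$ factor explicit. One cosmetic caveat: $\sum_{d=1}^{k-2}\alpha^{d} = O(\alpha^{k-2})$ only because $k$ is treated as a constant (or because $\alpha\ge 2$); this is harmless but worth a word.
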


\begin{proof} By arguments in the proof of \Thm{shadow-time},
we can show that $\sz{\bS}$ is at most the number of edges in $\cT$.
In $\cT$, the degree of the root is $n$, and by \Clm{degen}, 
the degree of all other nodes is at most $\alpha(G)$. The depth of the 
tree is at most $k-1$, since the value of $\ell$ decreases every step
down the tree. That proves that $n\alpha^{k-2}$ bound.
\end{proof}

This bound is not that interesting, and the Chiba-Nishizeki 
algorithm for exact clique enumeration matches this bound~\cite{ChNi85}.
Indeed, we can design instances where \Clm{size} is tight
(a set of $n/\alpha$ Erd\H{o}s-R\'{e}nyi graphs $G_{\alpha,1/3}$).
In any case, beating an exponential dependence on $k$ for any algorithm
is unlikely~\cite{CHKX04}.

\emph{The key empirical insight of this paper is that Tur\'{a}n clique shadows
are small for real-world graphs.} We explain in more detail in the next section;
\Fig{shadow_size} shows that the shadow sizes are typically less than $m$, and never more than
$10m$.

\section{Experimental results}\label{sec:results}

\textbf{Preliminaries:} We implemented our algorithms in {\tt C++} and ran our experiments on a
commodity machine equipped with a 3.00GHz Intel Core i7 processor with 8~cores
and  256KB  L2 cache (per core), 20MB L3 cache, and  128GB memory. 

We performed our experiments on a collection of graphs  from
SNAP~\cite{SNAP}, the largest with more than 100M edges. 
The collection includes social networks, web networks, and infrastructure networks.
Each graph is made simple by ignoring direction.
Basic properties of these graphs are presented in \Tab{estimates_TS}. 

In the implementation of \mainalg, there is just one parameter to choose:
the number of samples chosen in \Step{sample} in \sample. Theoretically,
it is set to $(20/\gamma\eps^2) \log(1/\delta)$; in practice, we 
just set it to 50K for all our runs. Note that $\gamma$ is not a free parameter
and is automatically set in \Step{gamma} of \mainalg.

We focus on counting $k$-cliques for $k$ ranging from $5$ to $10$.
We ignore $k=3,4$, since there is much existing (scalable) work
for this setting~\cite{SePiKo13,JhSePi15,AhNe+15}.
For the sake of presentation, we showcase results for $k=7,10$.
We focus on $k=10$ since no existing algorithm produces results for 10-cliques
in reasonable time. We also show specifics for $k=7$, to contrast
with $k=10$.


\begin{table*}[]
\centering
\caption{Graph properties}
\label{tab:estimates_TS}
\begin{adjustbox}{max width=\textwidth}
\begin{tabular}{|l|l|l|r|r|l|r|r|l|r|r|l|r|r|}
\hline
\multicolumn{5}{|c|}{}  & \multicolumn{3}{c|}{\textbf{k=5}}                       & \multicolumn{3}{c|}{\textbf{k=7}}                       & \multicolumn{3}{c|}{\textbf{k=10}}                      \\ \cline{7-7}
\hline
\textbf{graph}         & \textbf{vertices}     & \textbf{edges}        & \textbf{degen}        & \textbf{max degree}   & \textbf{estimate} & \textbf{\% error} & \textbf{time} & \textbf{estimate} & \textbf{\% error} & \textbf{time} & \textbf{estimate} & \textbf{\% error} & \textbf{time} \\ \cline{7-7}
\hline
loc-gowalla            & 1.97E+05              & 9.50E+05              & 51                    & 14730                 & 1.46E+07          & 0.20              & 2             & 4.78E+07          & 0.36              & 2             & 1.08E+08          & 1.63              & 3             \\
web-Stanford           & 2.82E+05              & 1.99E+06              & 71                    & 38625                 & 6.21E+08          & 0.00                 & 20            & 3.47E+10          & 0.13                 & 43            & 5.82E+12          & -                 & 52            \\
amazon0601             & 4.03E+05              & 4.89E+06              & 10                    & 2752                  & 3.64E+06          & 0.93              & 1             & 9.98E+05          & 0.95              & 1             & 9.77E+03          & 0.01              & 1             \\
com-youtube            & 1.13E+06              & 2.99E+06              & 51                    & 28754                 & 7.29E+06          & 1.08              & 7             & 7.85E+06          & 1.38              & 8             & 1.83E+06          & 0.20              & 8             \\
web-Google             & 8.76E+05              & 4.32E+06              & 44                    & 6332                  & 1.05E+08          & 0.10              & 2             & 6.06E+08          & 0.09              & 2             & 1.29E+10          & 0.82              & 2             \\
web-BerkStan           & 6.85E+05              & 6.65E+06              & 201                   & 84230                 & 2.19E+10          & 0.00                 & 101           & 9.30E+12          & 1.05                 & 214           & 5.79E+16          & -                 & 262           \\
as-skitter             & 1.70E+06              & 1.11E+07              & 111                   & 35455                 & 1.17E+09          & 0.01                 & 153           & 7.30E+10          & 0.23                 & 164           & 1.42E+13          & -                 & 180           \\
cit-Patents            & 3.77E+06              & 1.65E+07              & 64                    & 793                   & 3.05E+06          & 0.34              & 10            & 1.89E+06          & 0.83              & 9             & 2.55E+03          & 4.46              & 9             \\
soc-pokec              & 1.63E+06              & 2.23E+07              & 47                    & 14854                 & 5.29E+07          & 0.13              & 42            & 8.43E+07          & 0.48              & 45            & 1.98E+08          & 0.01              & 45            \\
com-lj                 & 4.00E+06              & 3.47E+07              & 360                   & 14815                 & 2.46E+11          & -                 & 106           & 4.48E+14          & -                 & 153           & 1.47E+19          & -                 & 252           \\
com-orkut              & 3.07E+06              & 1.17E+08              & 253                   & 33313                 & 1.57E+10          & 0.00                 & 3119          & 3.61E+11          & 1.97                 & 5587          & 3.03E+13          & -                 & 9298       \\  
\hline
\end{tabular}
\end{adjustbox}
\caption{Table shows the sizes, degeneracy, maximum degree of the graphs, the counts of 5, 7 and 10 cliques obtained using \mainalg{}, the percent relative error in the estimates, and time in seconds required to get the estimates. Some of the exact counts were obtained from ~\cite{FFF15} (where available). This is the first such algorithm that obtains these counts with $<2\%$ error without using any specialized hardware.}
\end{table*}

\textbf{Convergence of \mainalg:}
We picked two smaller graphs {\tt amazon0601} and {\tt web-Google} for which
the exact $k$-clique count is known (for all $k \in [5,10]$). We choose both $k=7, 10$.
For each graph, for sample size in [10K,50K,100K,500K,1M], we perform 100 runs
of the algorithm. We plot the spread of the output of \mainalg{}, over
all these runs. The results are shown in~\Fig{convergence}. The red line
denotes the true answer, and there is a point for the output of every
single run. Even for 10-clique counting,
the spread of 100 runs is absolutely minimal.
For 50K samples, the range of values is within 2\% of the true answer. 
This was consistent with all our runs. 

\textbf{Accuracy of \mainalg:} 
For many graphs (and values of $k$), it was not feasible to get an exact algorithm
to run in reasonable time. The run time of exact procedures can vary
wildly, so we have exact numbers for some larger graphs but could not
generate numbers for smaller graphs. 
We collected as many exact results as possible to validate \mainalg.
For the sake of presentation, we only show a snapshot of these results here.

For $k=7$, we collected exact results for a collection of graphs,
and for each graph, compared the output of a single run of \mainalg{} (with 50K samples)
with the true answer. We compute \emph{relative error}: |true - estimate|/true. 
These results are presented in~\Fig{acc}. Note that the
errors are within 2\% in all cases, again consistent with all our runs.

In \Tab{estimates_TS}, we present the output of our algorithm for a single run
on all instances and $k=5,7,10$. For every graph where we know the true value,
we present the relative error. Barring one example ({\tt cit-Patents} for $k=10$),
all errors are less than 2\%. Even in the worst case, the error is at most 5\%.

\begin{figure*}
\begin{subfigure}[b]{0.24\textwidth}
    \centering
    \includegraphics[width=\textwidth]{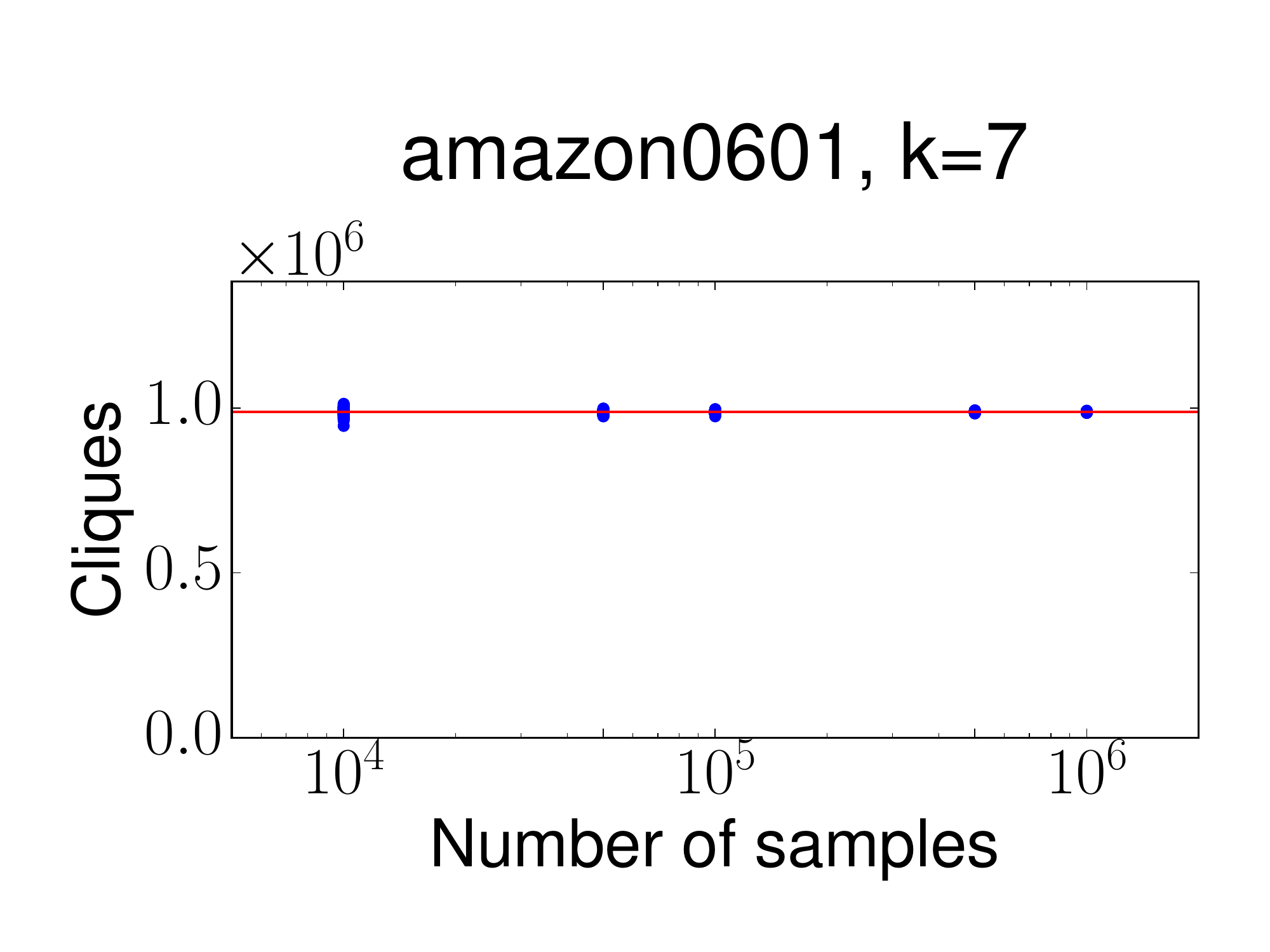}
    \label{fig:p1}
\end{subfigure}
\hfill
\begin{subfigure}[b]{0.24\textwidth}
    \centering
    \includegraphics[width=\textwidth]{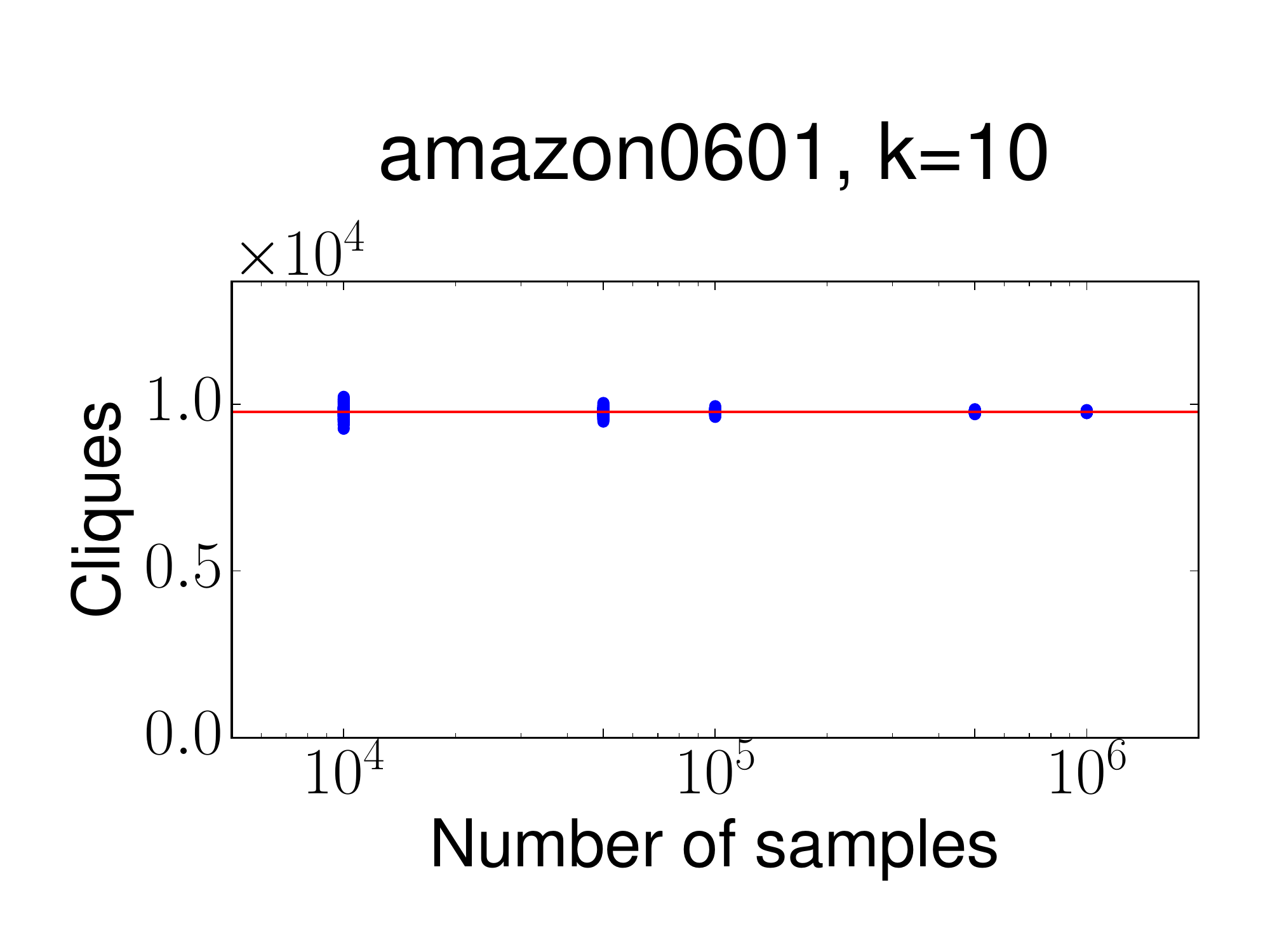}
    \label{fig:p2}
\end{subfigure}
\hfill
\begin{subfigure}[b]{0.24\textwidth}
    \centering
    \includegraphics[width=\textwidth]{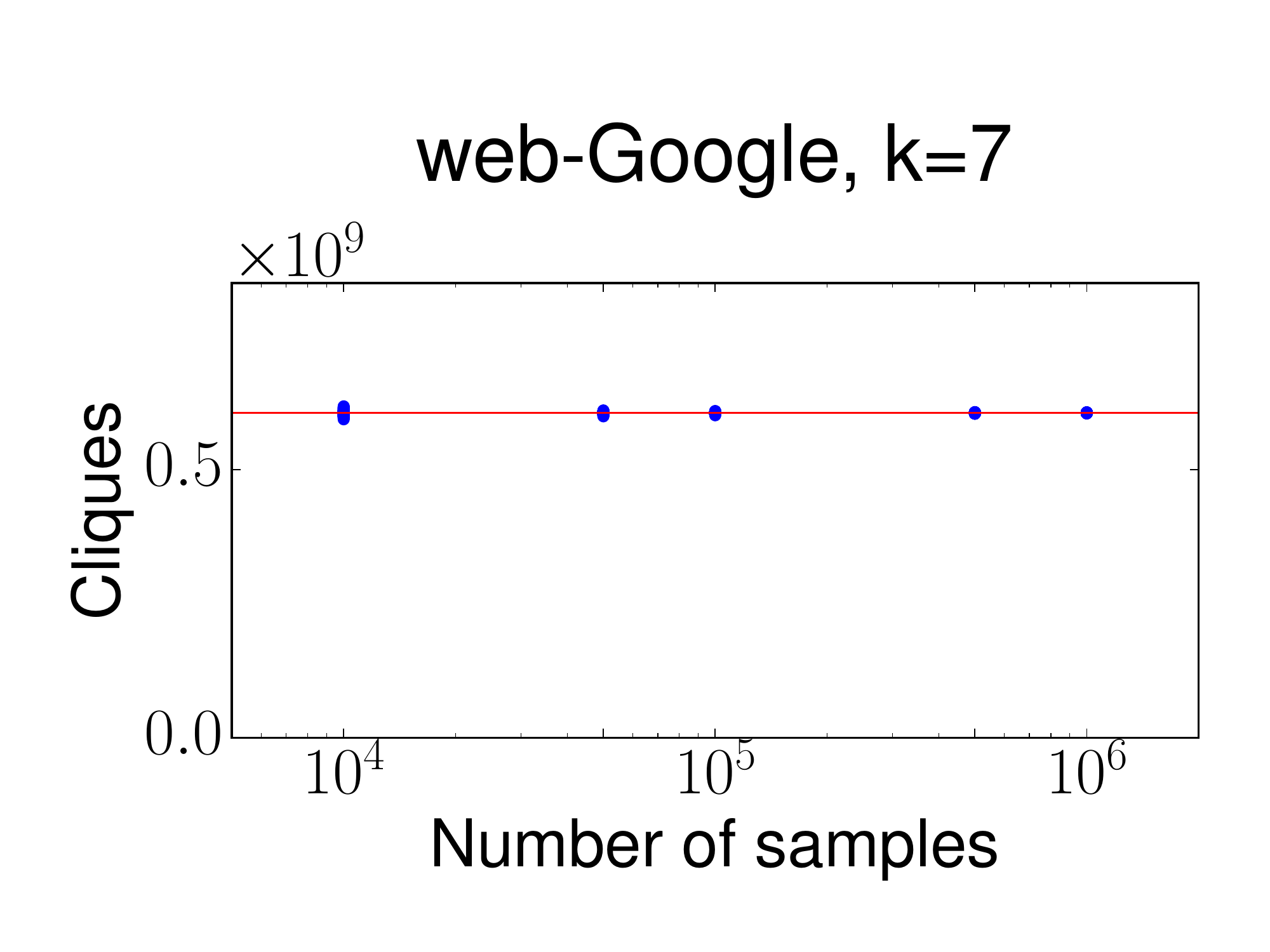}
    \label{fig:p3}
\end{subfigure}
\hfill
\begin{subfigure}[b]{0.24\textwidth}
    \centering
    \includegraphics[width=\textwidth]{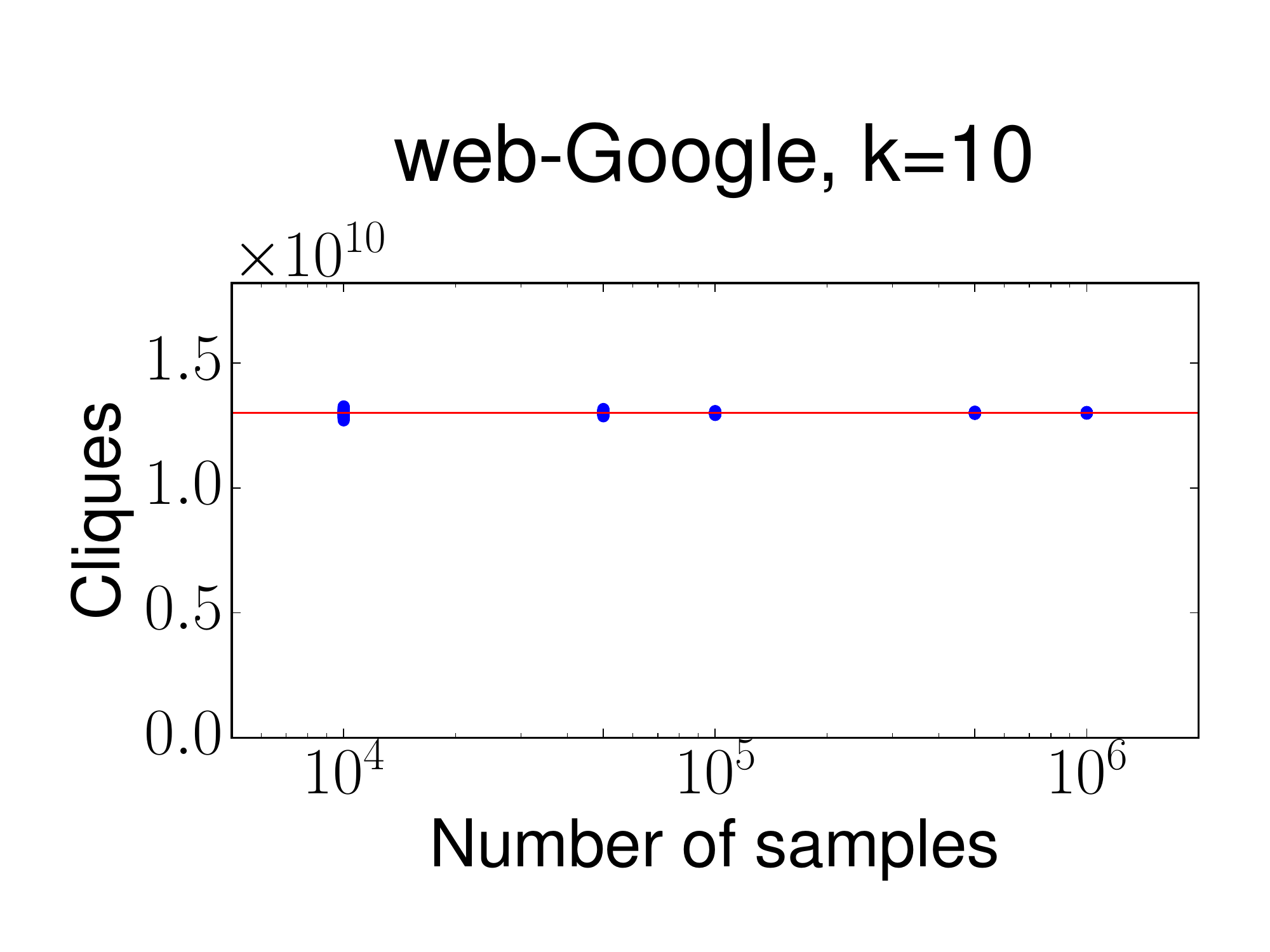}
    \label{fig:p4}
\end{subfigure}
\hfill
\caption{\footnotesize{ Figure shows convergence over 100 runs of \mainalg{} using 10K, 50K, 100K, 500K and 1M samples each. \mainalg{} has an extremely low spread and consistently gives very accurate results.}}
\label{fig:convergence}
\end{figure*}

%
%
%
%
%
%
%

\textbf{Running time:} All runtimes are presented in \Tab{estimates_TS}. (We show
the time for a single run, since there was little variance for different runs on
the same graph.) In all instances except {\tt com-orkut}, the runtime was a few minutes,
even for graphs with tens of millions of edges. We stress that these are all on a single
machine. For {\tt com-orkut}, the runtime is at most 2.5 hours. Previously, such
graphs were processed with MapReduce on clusters~\cite{FFF15}.

\begin{figure}[t]   
        \begin{subfigure}[b]{0.22\textwidth}
        \centering
        \includegraphics[width=\textwidth]{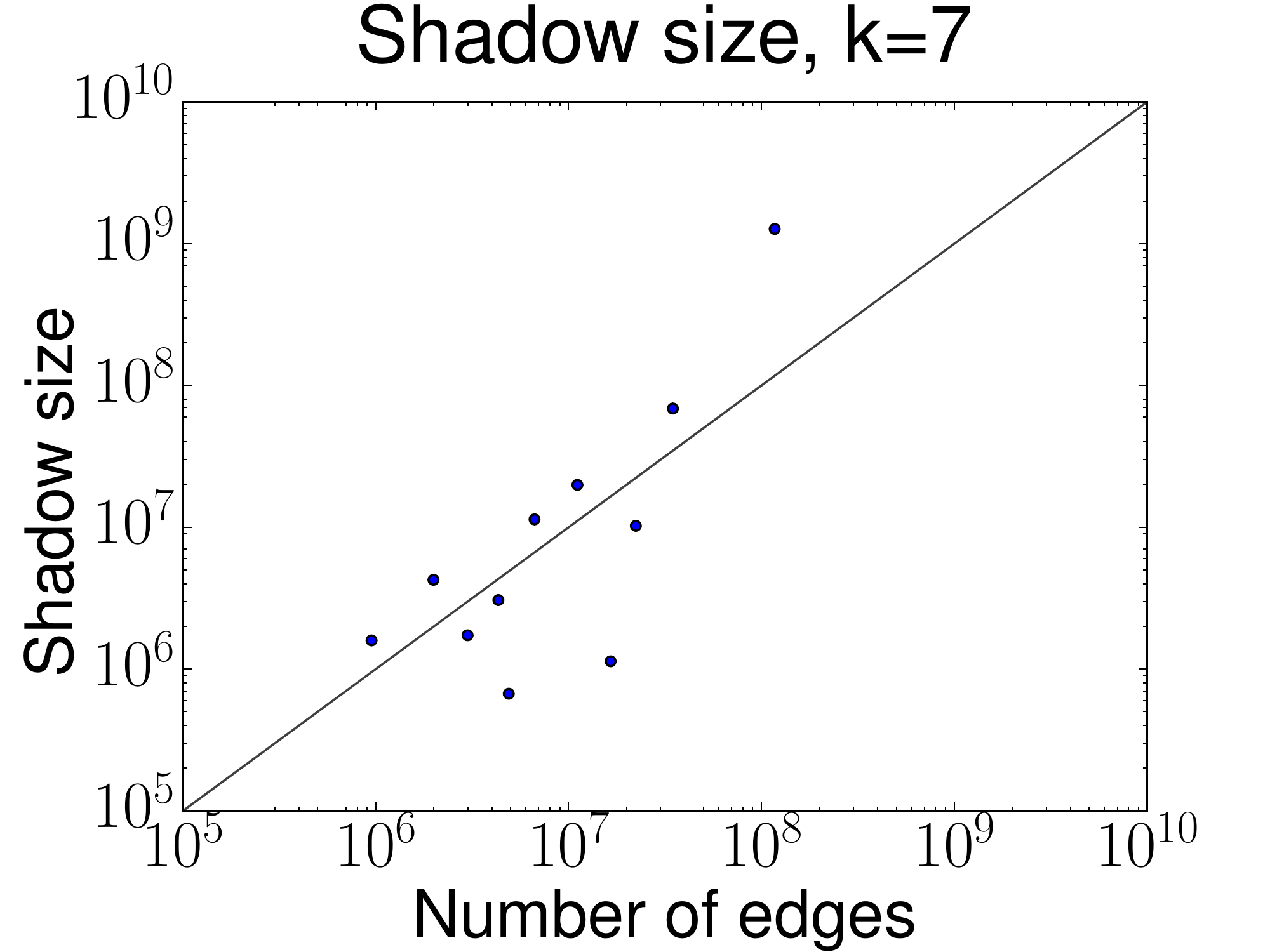}
        \label{fig:shadow7}
        \end{subfigure}
        \hfill
        \begin{subfigure}[b]{0.22\textwidth}
        \centering
        \includegraphics[width=\textwidth]{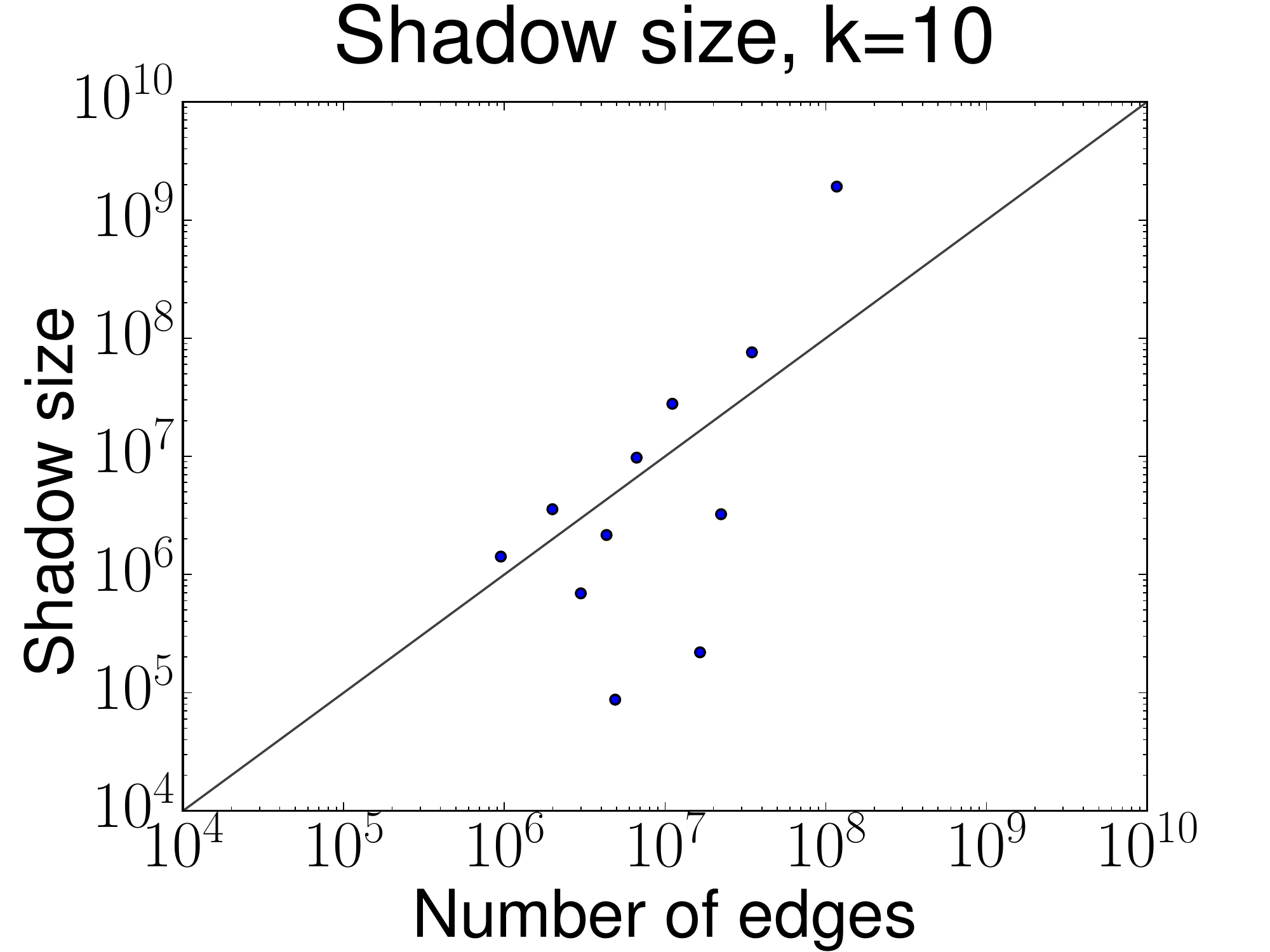}
        \label{fig:shadow10}
        \end{subfigure}
        \caption{\footnotesize{ Figures show the sizes of the Tur\'{a}n shadows generated for k=7 and k=10 in all the graphs. The runtime of the algorithm is proportional to the size of the shadow and crucially, the sizes scale only linearly with the number of edges.}}
        \label{fig:shadow_size}
\end{figure}

\begin{figure}[t]   
        \begin{subfigure}[b]{0.22\textwidth}
        \includegraphics[width=\textwidth]{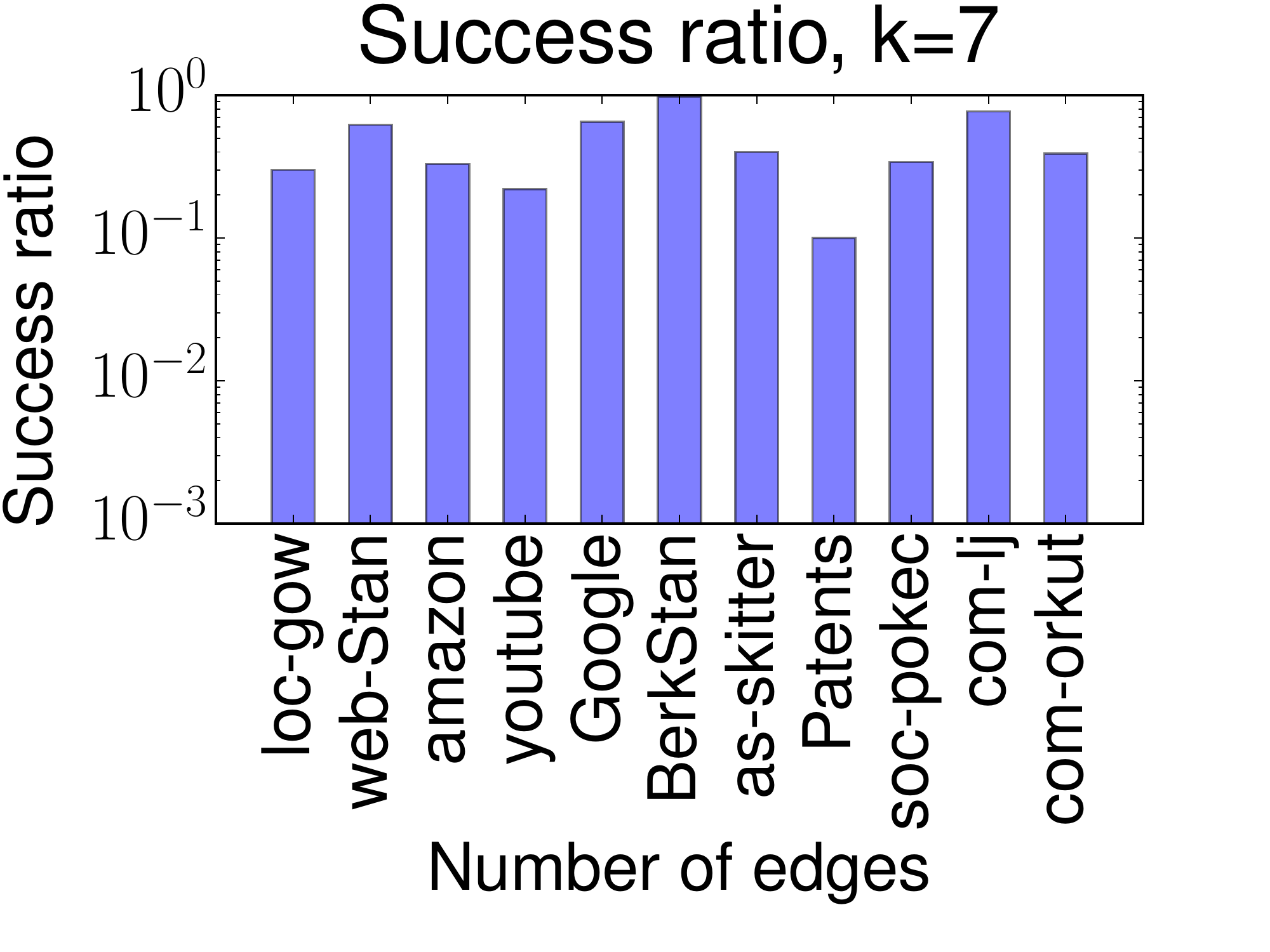}
        \label{fig:p5}
        \end{subfigure}
        \begin{subfigure}[b]{0.22\textwidth}
        \includegraphics[width=\textwidth]{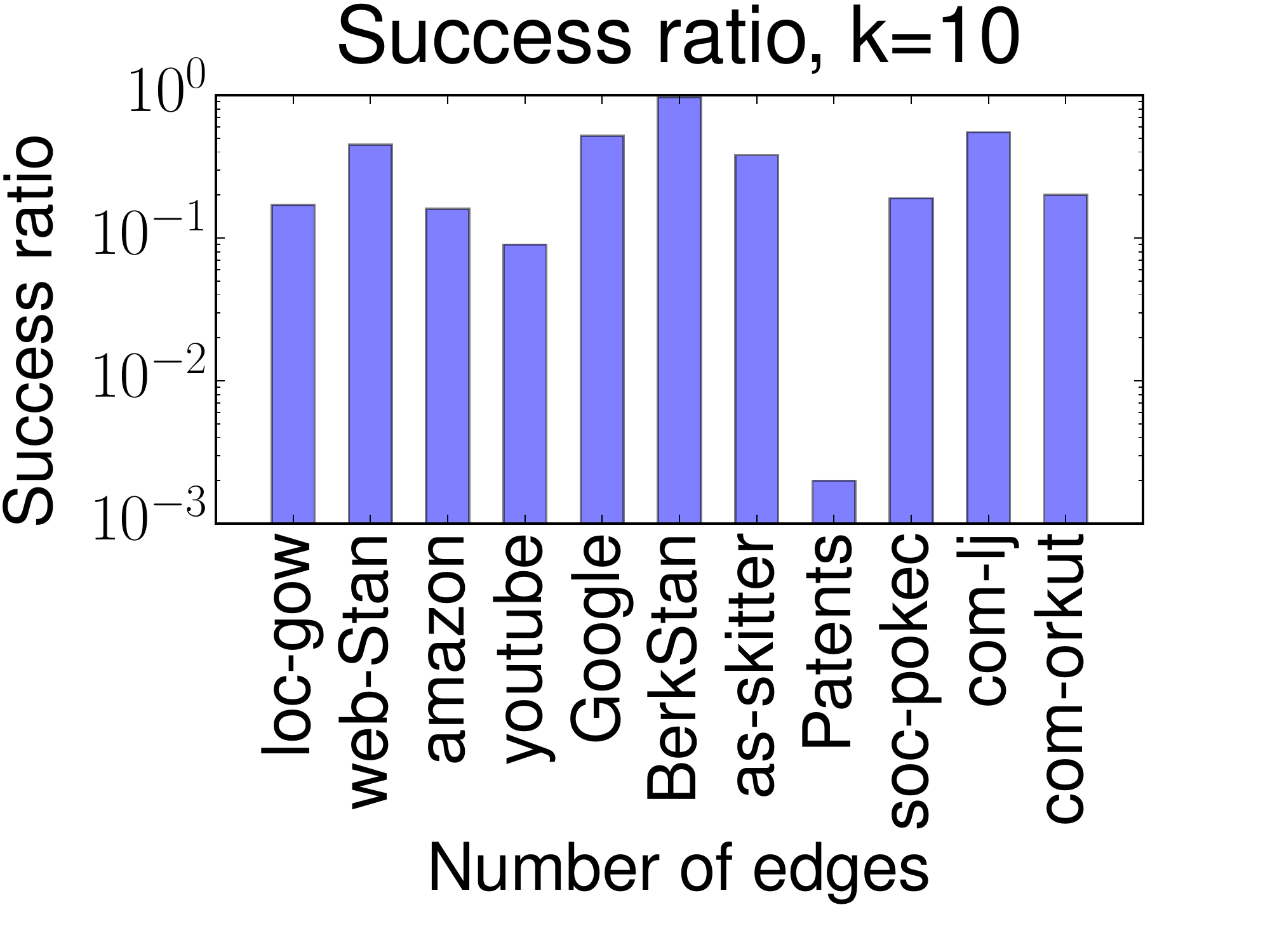}
        \label{fig:succ10}
        \end{subfigure}
        \caption{\footnotesize{ Figures show the success ratio (probability of finding a clique) obtained in the sampling experiments in all the graphs. }}
        \label{fig:success_ratios}
\end{figure}

\begin{figure}[t]   
        \begin{subfigure}[b]{0.45\textwidth}
        \centering
        \includegraphics[width=\textwidth]{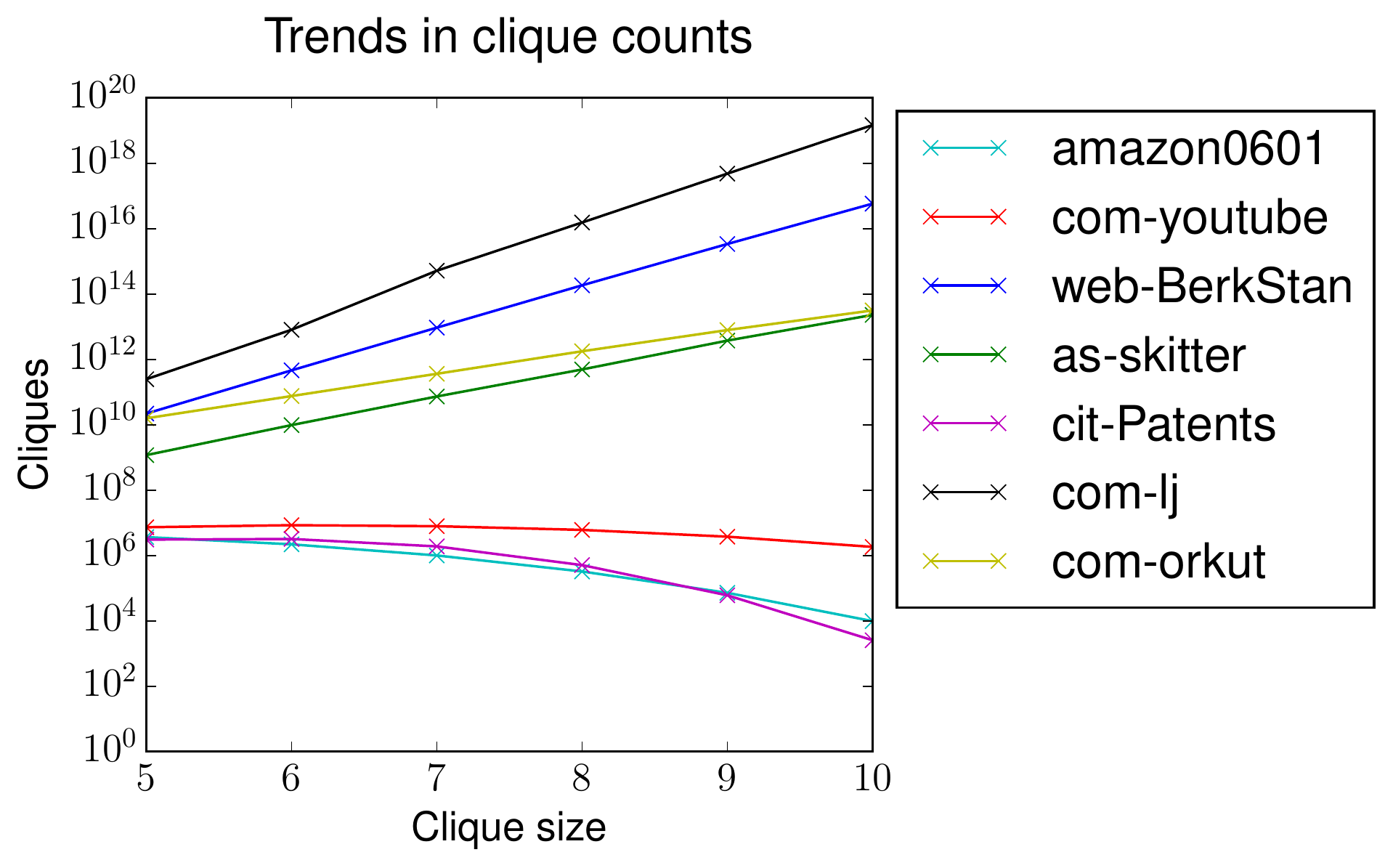}
        \end{subfigure}
        \caption{\footnotesize{ Figures show the trends in clique counts of some graphs. While cit-Patents, com-youtube and amazon0601 show a decreasing trend, all other graphs show an exponential increase in the number of cliques with clique size. }}
        \label{fig:trends}
\end{figure}

\subsection{Comparison with other algorithms}  
Our exact brute-force procedure is a
well-tuned algorithm that uses the degeneracy ordering and exhaustively searches outneighborhoods
for cliques. This is basically the procedure of Finetti {\em et al.}~\cite{FFF15}, inspired
by the algorithm of Chiba-Nishizeki~\cite{ChNi85}. 
We compare with the following algorithms.
\begin{asparaitem}
    \item Color coding: This is a classic algorithmic technique~\cite{AlYuZw94}. For counting $k$-cliques, the algorithm randomly colors vertices with one of $k$ colors. Then, the algorithm
    uses a brute-force procedure to count polychromatic $k$-cliques (where each vertex
    has a different color). This number is scaled to 
    give an unbiased estimate, and the coloring helps cut
    down the search time of the brute-force procedure. This method has been applied
    in practice for numerous pattern counting problems~\cite{HoBe+07,BetzlerBFKN11,ZhWaBu+12}. 
    \item Edge sampling: Edge sampling was discussed by Tsourakakis
    {\em et al.} in the context of triangle counting~\cite{TsKaMiFa09,TsDrMi09,TsKoMi11}, though the idea is
    flexible and can be used for large patterns~\cite{ElShBo16}. The idea here is to sample
    each edge independently with some probability $p$, and then count $k$-cliques
    in the down-sampled graph. This number is scaled to give an unbiased estimate 
    for the number of $k$-cliques.

    For clique counting, we observe that minor differences in $p$ (by $0.1$) 
    have huge effects on runtime and accuracy. To do a fair comparison, we run multiple experiments
    with varying $p$ (increments of $0.1$), until we reach the smallest $p$ that consistently
    yields less than 5\% error. (Note that the error of \mainalg{} is significantly smaller
    that this.) Timing comparisons are done with runs for that value of $p$.

    \item GRAFT~\cite{RaBhHa14}: Rahman {\em et al.} give a variant of edge sampling 
    with better performance for large pattern counts~\cite{RaBhHa14}.
    The idea is to sample some set of edges, and exactly count the number of $k$-cliques
    on each of these edges. This can be scaled into an unbiased estimate for the total
    number of $k$-cliques.

    As with edge sampling, we increase the number of edge samples until we get consistently
    within 5\% error. Timing comparisons are done with this setting. Typical settings
    seems to be in the range of 100K to 1M samples. Beyond that, GRAFT is infeasible, even
    for graphs with 10M edges. 
\end{asparaitem}

We focus on $k=7,10$ for clarity. In all cases, we simply terminate the algorithm
if it takes more than the minimum of 7 hours and 100 times the time required by \mainalg{}.
We present the speedup of \mainalg{} with respect to all these algorithms
in \Fig{7-speedup} for k=7. For k=10, for most instances, no competing algorithm terminated.
\begin{asparaitem}
    \item $k=7$ (\Fig{7-speedup}): \mainalg{} outperformed Color Coding and GRAFT across all instances. Color Coding never gave good accuracy, so we ignore it in our speedup plots.
    We do note that Edge Sampling gives extremely good performance in some instances,
    but can be very slow in others. For {\tt amazon0601}, {\tt com-youtube},
    {\tt cit-Patents}, and {\tt soc-pokec}, Edge Sampling is faster than \mainalg.
    But \mainalg{} handles all these graphs with a minute. The only exception is 
    {\tt com-orkut}, where GRAFT is much faster than \mainalg.
    We note that all other algorithms can perform extremely poorly on fairly small graphs:
    Edge Sampling is 10-100 times slower on a number of graphs, which have only millions
    of edges.  On the other hand, \mainalg{} always runs in minutes for these graphs.

    \item $k=10$ : \emph{No competing algorithm} is able to handle 10 cliques for all datasets,
    even in 7 hours (giving a speedup of anywhere between 3x to 100x). They all generally fail
    for at least half of the instances. \mainalg{}
    gets an answer for {\tt com-orkut} within 2.5 hours, and handles all other
    graphs in minutes. 
\end{asparaitem}

\subsection{Details about \mainalg}

\textbf{Shadow size:} In \Fig{shadow_size}, we plot the size of the $k$-clique Tur\'{a}n shadow
with respect to the number of edges in each instance. This is done for $k=7,10$.
(The line $y=x$ is drawn as well.) As seen from \Thm{main-full}, the size
of the shadow controls the storage and runtime of \mainalg.
We see how in almost all instances, the shadow
size is around the number of edges. This empirically explains the efficiency of \mainalg.
The worst case is {\tt com-orkut}, where the shadow size is at most ten
times the number of edges.

\textbf{Success probability:} The final estimate of \mainalg{} is generated through
\sample. We asserted (theoretically) that $O(m)$ samples suffice, and in practice,
we use 50K samples. In \Fig{success_ratios}, we plot (for $k=7,10$) the empirical
probability of finding a clique in \Step{X} of \sample. The higher this is,
the fewer samples we require and the more confidence in the statistical validity of our estimate. Almost all (empirical) probabilities are more than
$0.1$, and 50K samples are more than enough for convergence.

\textbf{Trends in clique numbers:} \Fig{trends} plots the number of $k$-cliques
(as computed by \mainalg) versus $k$. (We do not consider all graphs for the sake of clarity.)
Interestingly, there are some graphs where the number of cliques grows exponentially.
This is probably because of a large clique/dense-subgraph, and it would be interesting
to verify this. For another class of graphs, the clique counts are consistently
decreasing. This seems to classify graphs into one of two types. We feel further analysis
of these trends would be interesting, and \mainalg{} can be a useful tool
for network analysis.

%
%

\clearpage

\scriptsize

\bibliographystyle{abbrv}
\bibliography{cliquesampling}

\end{document}